\newcommand{\CC}{\mathbb C}
\newcommand{\ZZ}{\mathbb Z}
\newcommand{\PP}{\mathbb P}
\newcommand{\cL}{\mathcal L}
\newcommand{\cM}{\mathcal M}
\newcommand{\hh}{\hat{h}}
\newcommand{\hht}{\hat{t}}
\newcommand{\p}{\partial}
\newcommand{\ve}{\varepsilon}
\newtheorem{theorem}{Theorem}
\newtheorem{prop}[theorem]{Proposition}
\newtheorem{lemma}[theorem]{Lemma}
\newtheorem{example}[theorem]{Example}
\newtheorem{Def}[theorem]{Definition}
\begin{document}

\title[Modular Frobenius manifolds and their invariant flows]
{Modular Frobenius manifolds and their invariant flows}

\author{Ewan K. Morrison, Ian A.B. Strachan}
\date{3$^{\rm rd}$ June 2010}
\address{Department of Mathematics\\ University of Glasgow\\
Glasgow G12 8QQ\\ U.K.}

\email{e.morrison@maths.gla.ac.uk, i.strachan@maths.gla.ac.uk}

\keywords{Frobenius manifolds}
\subjclass{53B25, 53B50}

\begin{abstract}
The space of Frobenius manifolds has a natural involutive symmetry on it: there exists a map $I$ which send a Frobenius manifold
to another Frobenius manifold. Also, from a Frobenius manifold one may construct a so-called almost dual Frobenius manifold which
satisfies almost all of the axioms of a Frobenius manifold. The action of $I$ on the almost dual manifolds is studied, and the action of
$I$ on objects such as periods, twisted periods and flows is studied. A distinguished class of Frobenius manifolds sit at the
fixed point of this involutive symmetry, and this is made manifest in certain modular properties of the various structures.
In particular, up to a simple
reciprocal transformation, for this class of modular Frobenius manifolds, the flows are invariant under the action of $I\,.$
\end{abstract}

\maketitle

\tableofcontents

\section{Introduction}

The main property of the Chazy equation

\begin{equation}
\gamma'''(\tau)=6\gamma(\tau) \gamma''(\tau)-9{\gamma'(\tau)}^2 \,,
\label{chazy}
\end{equation}
is its modularity property, that is, its invariance under the action of the group
$SL(2,\mathbb{Z})\,:$
\begin{eqnarray*}
\tau & \mapsto & \frac{a\tau + b}{c\tau + d}, \quad a\,,b\,,c\,,d \in \mathbb{Z}\,, ad-bc =1,  \\
\gamma(\tau) & \mapsto & (c\tau + d)^2\gamma(\tau)+2c(c\tau + d).
\end{eqnarray*}
The aim of this paper is to study integrable equations of hydrodynamic type
\[
\frac{\partial t^i}{\partial T} = M^i_j({\bf t}) \frac{\partial t^j}{\partial X}\,, \qquad\qquad i\,,j=1\,,\ldots\,, N
\]
which have a similar invariance under certain multivariable extensions of such modular transformations.

The origins of such systems comes from the study of certain Frobenius manifolds \cite{dubrovin1}.
Given an arbitrary Frobenius manifold one can associate to it natural families of hydrodynamic systems, and the systems that will
be studied arise from this construction. Before discussing the symmetry properties of these integrable hydrodynamic
equations we first examine the modularity properties of the underlying Frobenius manifold.
The space of Frobenius manifolds possesses various symmetries and the
main object of study in this paper is an involutive symmetry $I\,,$ also known as a type II-symmetry, which maps solutions $F$ of the underlying WDVV equations
to a new solution $\widehat F$:
\[
I\,:\, F \longrightarrow {\widehat F}\,.
\]
Up to some trivial redefinitions of variables by changes of sign, $I^2=id\,,$ the identity transformation
\footnote{For convenience we refer to a Frobenius manifold $F$ rather than to the Frobenius manifold whose prepotential is $F\,.$}.
However, certain special Frobenius manifolds lie at the fixed
point of this involution and these inherit special modularity properties: for convenience such manifolds will be referred
to as modular Frobenius manifolds, and a characterization of these, in terms of spectral data, will be given. Examples of such
modular Frobenius manifolds include:

\begin{itemize}

\item[$\bullet$] Jacobi group orbit spaces \cite{B,iabs2};

\item[$\bullet$] the unfolding spaces of the three simple elliptic singularities \cite{iabs3} ${\widetilde E}_{6,7,8}\,.$

\end{itemize}

In the first part of the paper we combine the study of $I$ with the notion of an almost dual Frobenius manifold. Given a Frobenius
manifold, and hence a solution $F$ of the underlying WDVV equations, Dubrovin \cite{dubrovin2} showed how one can construct a new solution $F^\star$
which satisfies most (but crucially not all) of the axioms of a Frobenius manifold. We denote this transformation by $\star:$
\[
F \stackrel{\star}{\longrightarrow} F^\star
\]
Combining this with the symmetry $I$ leads to the following picture
\[
\begin{array}{ccc}
F&\overset{I}{\longrightarrow}&
{\hat F}\\
\downarrow & & \downarrow \\
F^\star
&&
{\hat F}^\star
\end{array}
\]
In section 4 we study the induced involutive symmetry
\[
F^\star \overset{I^\star}{\longrightarrow} {\hat F}^\star\,.
\]
in terms of almost duality. It turns out that the action $I^\star$ is different for modular and non-modular Frobenius manifolds
(the action of type I-symmetries were discussed in \cite{RS2}).

The study of the action of the modular group of the associated hydrodynamic flows will be done in two stages:
the action of the symmetry $I$ on the tensors $M^i_j({\bf t})$ will first be given for an arbitrary Frobenius manifold
(similar results have also been recently obtained by Dingdian and Zhang \cite{zhang}) and then specialized to those tensors arising from modular
Frobenius manifolds. This gives a new symmetry of the corresponding hydrodynamic flows.

Flow equations in the dual picture can also be written down (this being essentially a Muira-type transformation) and the action
of $I^\star$ on these flows can then be studied. One of the advantages of this dual picture is that is it easier to write down
general examples of such flows. Thus large classes of explicit modular invariant flows may be written down explicitly.

\medskip

\section{Preliminaries}\label{pre}
\subsection{Frobenius Manifolds and the WDVV Equations}
Frobenius manifolds were introduced as a way to give a geometric understanding to solutions of the Witten-Dijkraaf-Verlinde-Verlinde (WDVV) equations,
\begin{equation}\label{wdvv}
\frac{\partial^3 F}{\partial t^{\alpha} \partial t^{\beta} \partial t^{\lambda}}\eta^{\lambda \mu} \frac{\partial^3 F}{\partial t^{\mu} \partial t^{\delta} \partial t^{\gamma}} = \frac{\partial^3 F}{\partial t^{\delta} \partial t^{\beta} \partial t^{\lambda}}\eta^{\lambda \mu} \frac{\partial^3 F}{\partial t^{\mu} \partial t^{\alpha} \partial t^{\gamma}}
\end{equation}
for some quasihomogeneous function $F(t)$. Throughout this paper $\eta^{\alpha\beta}$ will be defined via $\eta^{\alpha\beta}\eta_{\beta\kappa}=\delta^{\alpha}_{\kappa}$ where
\begin{equation}
\eta_{\alpha\beta} = \frac{\partial^3 F}{\partial t^1 \partial t^{\alpha} \partial t^{\beta}}
\end{equation}
is constant and non-degenerate.
We recall briefly how to establish the correspondence between Frobenius manifolds and solutions of WDVV.
\begin{Def} The triple $(A,\circ, \eta)$ is a \emph{Frobenius algebra} if:
\begin{itemize}
\item[1.]{$(A, \circ)$ is a commutative associative algebra over $\CC$ with unity $e$;}
\item[2.]{The bilinear pairing $\eta$ and mutiplication $\circ$ satisfy the following \emph{Frobenius condition}
\begin{equation*}
\eta( X\circ Y, Z ) = \eta( X, Y \circ Z )\,, \quad X, Y, Z \in A.
\end{equation*}}
\end{itemize}
\end{Def}
\noindent With this one may define a Frobenius manifold.
\begin{Def} Let $\cM$ be a smooth manifold. $\cM$ is called a \emph{Frobenius manifold} if each tangent space $T_t\cM$ is equipped with the structure of a Frobenius algebra varying smoothly with $t\in \cM$, and further
\begin{itemize}
\item[1.]{The invariant inner product $\eta$ defines a flat metric on $\cM$. }
\item[2.]{The unity vector field is covariantly constant with respect to the Levi-Civit\'{a} connection for $\eta$,
\begin{equation}\label{fm1}
{}^\eta\nabla e =0.
\end{equation}}
\item[3.]{Let
\begin{equation}
c(X,Y,Z):=\eta( X\circ Y, Z)\,, \quad X, Y, Z \in T_t\cM,
\end{equation}
then the $(0,4)$ tensor ${}^\eta\nabla_Wc(X,Y,Z)$ is totally symmetric. }
\item[4.]{There exists a vector field $E\in\Gamma(T\cM)$ such that $\nabla\nabla E=0$ and
\begin{equation}
\cL_E \eta = (2-d)\eta, \quad \cL_E \circ = \circ, \quad \cL_E e = -e.
\end{equation}}
$E$ is called the Euler vector field.
\end{itemize}
\end{Def}
Condition 1 implies there exist a choice of coordinates $(t^1, ..., t^N)$ such that the Gram matrix $\eta_{\alpha\beta}=\langle \partial_{\alpha}, \partial_{\beta} \rangle$ is constant. Furthermore, this may be done in such a way that $e=\partial_1$. In such a coordinate system, partial and covariant derivatives coincide, and condition 3 becomes $c_{\alpha\beta\gamma,\kappa} =c_{\alpha\beta\kappa,\gamma}$. Successive applications of the Poincar\'{e} lemma then implies local existence of a function $F(t)$ called the \emph{free energy} of the Frobenius manifold such that
\begin{equation}
c_{\alpha\beta\gamma} = \frac{\partial^3 F(t)}{\partial t^{\alpha}\partial t^{\beta} \partial t^{\gamma}}.
\end{equation}
Since $\eta(X,Y) = \eta (e\circ X, Y) = c(e, X, Y)$, we have
\begin{equation}
\eta_{\alpha\beta} = c_{1\alpha\beta}.
\end{equation}
Defining $(\eta_{\alpha\beta})^{-1}=\eta^{\alpha\beta}$, the components of $\circ$ are given by $c^{\alpha}_{\beta\gamma}=\eta^{\alpha\varepsilon}c_{\varepsilon\beta\gamma}$. Associativity of $\circ$ is then equivalent to \eqref{wdvv}. Condition 4 leads to requiring $F$ be quasihomogeneous
\begin{equation}
\cL_EF = E(F) = d_F \cdot F = (3-d)F \quad \mbox{modulo quadratic terms}.
\end{equation}
We remark further that in the case where the grading operator $\nabla E$ of the Frobenius algebras is diagonalizable, the Euler vector field may be reduced to the form
\begin{equation}\label{euvect1}
E(t) = \sum_{\alpha}(1-q_\alpha)t^{\alpha}\partial_{\alpha} + \sum_{\{\alpha:q_\alpha=1\}} r^\sigma \partial_\sigma.
\end{equation}
We will, for the most part, restrict ourselves in this paper to those Frobenius manifolds with the property that $r^\sigma=0\,,\sigma=1\,,\ldots\,,N\,.$
This excludes those examples coming from quantum cohomology and the extended affine Weyl orbit spaces \cite{DZ}: the application
of the inversion symmetry (defined below) even to the simplest prepotential
\[
F=\frac{1}{2} (t^1)^2 t^2 + e^{t^2}
\]
leads to prepotentials of non-analytic functions (e.g. functions such as $e^{-1/x}$ will appear).
\subsection{The deformed connection}\label{gmeq}
One may define on a Frobenius manifold a one parameter family of flat connections parameterized by $z \in \PP^1$,
\begin{equation}
{}^\eta {\widetilde\nabla}_X Y = {}^\eta\nabla_X Y -z X \circ Y, \quad X, Y, Z \in \Gamma(T\cM).
\end{equation}
Here ${}^\eta\nabla$ is the Levi-Civita connection of the metric $\eta\,.$ This connection is flat identically in $z$ by virtue of the
axioms of a Frobenius manifolds. Thus there exists a choice of
coordinates\footnote{The notation $\mathfrak{t}$ will be used rather than the more standard $\tilde{t}$ to avoid, when we look action on $I$,
symbols of the form $\hat{\tilde{t}}\,.$}
$\mathfrak{t}_{\alpha}(t, \lambda)$ such that
\begin{equation}
{}^\eta{\widetilde\nabla} d\mathfrak{t}_{\alpha} = 0.
\end{equation}
In the flat coordinates $(t^1, ... , t^N)$ of connection ${}^\eta\nabla$ this reads
\begin{equation}\label{gm1}
\frac{\partial^2 \mathfrak{t}^{\alpha}}{\partial t^{\kappa}\partial t^{\varepsilon}} = z c^{\sigma}_{\kappa\varepsilon}\frac{\partial \mathfrak{t}^{\alpha}}{\partial t^{\sigma}}.
\end{equation}
Taking in to account the fact that for $z=0$ the two coordinate systems $(t^1, ... t^N)$ and $(\mathfrak{t}^1, ... , \mathfrak{t}^N)$ coincide, one may construct power series solutions to \eqref{gm1}:
\begin{equation*}
\mathfrak{t}^{\alpha}({\bf t},z) = \sum_{n=0}^{\infty} z^n h^{(n,\alpha)}({\bf t}).
\end{equation*}
This ansatz yields a recursion relation for the functions $h^{(n,\alpha)}$,
\begin{equation}\label{recur1}
\frac{\partial^2 h^{(n,\alpha)}}{\partial t^{\kappa}\partial t^{\varepsilon}} =  c^{\sigma}_{\kappa\varepsilon}\frac{\partial h^{(n-1,\alpha)}}{\partial t^{\sigma}}, \quad h^{(0,\alpha)} := t_{\alpha} = \eta_{\alpha\varepsilon}t^{\varepsilon}, \quad n\geq 0.
\end{equation}
Thus starting from the Casimirs $t_\alpha$ one may construct the corresponding deformed flat coordinates recursively.
\medskip
\subsection{From Frobenius Manifolds to Equations of Hydrodynamic Type}
Let $\cM$ be a Frobenius manifold with local coordinates $(u^1, ... , u^N)$. We define on the the loop space of $\cM$ a Poisson bracket of hydrodynamic type
\begin{equation}\label{hydrobrkt1}
\{ H_{(n,\varepsilon)} , H_{(m,\nu)} \} = \int_{S^1} \frac{\delta H_{(n,\varepsilon)}}{\delta u^\alpha} \left( \eta^{\alpha\beta}\frac{\partial~}{\partial X} -\eta^{\alpha\sigma}\Gamma_{\sigma\kappa}^\beta \, t^{\kappa}_{,X} \right) 
\frac{\delta H_{(m,\nu)}}{\delta u^{\beta}} dX,
\end{equation}
where the functional densities for $H_{(n,\varepsilon)}$ depend on $u$ and not its derivatives. Indeed, we define
\begin{equation}\label{conserv1}
H_{(n,\varepsilon)}= \int_{S^1} h^{ (n,\varepsilon) }(u)dX.
\end{equation}
The bracket \eqref{hydrobrkt1} may be used to define the so-called principal hierarchy of the Frobenius manifold
\begin{equation}
\frac{\partial u^{\alpha}}{\partial T_{(n,\kappa)}} = \{ u^{\alpha} , H_{(n,\kappa)} \}.
\end{equation}
In the flat coordinate system $(t^1, ... , t^N)$ this becomes
\begin{equation}
\frac{\partial t^{\alpha}}{\partial T_{(n,\kappa)}}  = \eta^{\alpha\beta}\frac{\partial~}{\partial X}\left( \frac{\partial h^{(n,\kappa)}}{\partial t^{\beta}} \right) = \eta^{\alpha\beta}\frac{\partial^2 h^{(n,\kappa)}}{\partial t^{\beta}\partial t^{\nu}}\frac{\partial t^{\nu}}{\partial X}.
\end{equation}
One may use the relation \eqref{recur1} to illuminate further how the properties of these equations depend precisely on the algebraic structure on the tangent spaces $T_t\cM$ viz:
\begin{equation}\label{matrixform1}
\frac{\partial t^{\alpha}}{\partial T_{(n,\kappa)}} =  \eta^{\alpha\beta}c^{\sigma}_{\beta\lambda}\frac{\partial h^{(n-1,\kappa)}}{\partial t^{\sigma}} \frac{\partial t^{\lambda}}{\partial X}.
\end{equation}
For future use we define
\[
M_{(n,\kappa)}({\bf t})^{\alpha}_{\hspace{0.1cm}{\lambda}}=\eta^{\alpha\beta}c^{\sigma}_{\beta\lambda}\frac{\partial h^{(n-1,\kappa)}}{\partial t^{\sigma}} \,.
\]
and note these flow equations simplify when $\alpha=N\,,$ i.e.
\begin{equation}
\frac{\partial t^N}{\partial T_{(n,\kappa)}} = \frac{\partial h^{(n-1,\kappa)}}{\partial X}\,.
\label{PT}
\end{equation}
Since $t_1=t^N$ is the Egorov potential of the metric $\eta\,,$ this shows that the potential is a conserved quantity of all the flows,
in agreement with \cite{PT} in the more general setting of semi-Hamiltonian Egorov systems.
 \subsection{Symmetries of the WDVV Equations}

The space of solutions of the WDVV equation possesses certain natural symmetries.

\begin{Def} A \emph{symmetry} of WDVV is a map
\begin{equation}
F(t) \mapsto \hat{F}(\hat{t}), \quad t^{\alpha} \mapsto \hat{t}^{\alpha}, \quad \eta_{\alpha\beta} \mapsto \hat{\eta}_{\alpha\beta}
\end{equation}
from one solution to another.
\end{Def}
In \cite{dubrovin1} two symmetries were defined:

\begin{Def}
{}
\begin{itemize}

\item[(a)]\emph{Legendre-type transformations $S_\kappa$}: This is defined by
\begin{eqnarray*}
{\hat t}_\alpha & = & \partial_{t^\alpha} \partial_{t^\kappa} F(t)\,,
\qquad ({\rm N.B.}\quad{\hat t}_\alpha=\eta_{\alpha\beta}t^\beta)\\
\frac{\partial^2{\hat F}}{\partial{\hat t}^\alpha \partial{\hat t}^\beta} & = &
\frac{\partial^2{     F}}{\partial{     t}^\alpha \partial{     t}^\beta} \,,\\
{\hat\eta}_{\alpha\beta} & = & \eta_{\alpha\beta}\,.
\end{eqnarray*}

\medskip

\item[(b)]\emph{Inversion Symmetry:} This is defined by
\begin{displaymath}
\hat{t}^1  = \frac{1}{2}\frac{t_{\sigma}t^{\sigma}}{t^N}, \quad \hat{t}^{\alpha} = \frac{t^{\alpha}}{t^N}, \quad\mbox{  (for }\alpha \neq 1, N), \quad \hat{t}^N = -\frac{1}{t^N},
\end{displaymath}
\begin{equation}\label{isdef}
\hat{F}(\hat{t}) = (\hat{t}^N)^2 F \left( \frac{1}{2}\frac{\hat{t}_{\sigma}\hat{t}^{\sigma}}{\hat{t}^N}, -\frac{\hat{t}^{2}}{\hat{t}^N}, ... , -\frac{\hat{t}^{N-1}}{\hat{t}^N},   -\frac{1}{\hat{t}^N} \right) + \frac{1}{2}\hat{t}^1\hat{t}_{\sigma}\hat{t}^{\sigma},
\end{equation}
\begin{displaymath}
\hat{\eta}_{\alpha\beta} = \eta_{\alpha\beta}.
\end{displaymath}
\end{itemize}
\end{Def}

\medskip

\noindent In this paper we will concentrate on inversion symmetries (also called type II symmetries). For these,
the
structure constants of the inverted Frobenius manifold are related to those of the original by
\begin{equation}\label{ctrans1}
\hat{c}_{\alpha\beta\gamma} = (t^N)^{-2}\frac{\partial t^{\lambda}}{\partial \hat{t}^{\alpha}}\frac{\partial t^{\mu}}{\partial \hat{t}^{\beta}}\frac{\partial t^{\nu}}{\partial \hat{t}^{\gamma}}c_{\lambda\mu\nu}.
\end{equation}

\noindent Under the assumption that 
that $r_i=0$ for $i=1\,,\ldots\,,N$
the corresponding Euler vector field of the inverted Frobenius manifold $\hat{F}$ have the form
\begin{equation}
\hat{E}(\hat{t}) = \sum_{\alpha}(1-\frac{\hat{d}}{2}-\hat{\mu}_{\alpha})\hat{t}^{\alpha}\hat{\partial}_{\alpha},
\end{equation}
where
\begin{equation}
\hat{d}=2-d, \quad \hat{\mu}_1 = \mu_N - 1, \quad \hat{\mu}_N = \mu_1 + 1, \quad \hat{\mu}_i = \mu_i, \quad \mbox{for }i \neq 1, N.
\end{equation}

\begin{Def} A Frobenius manifold is said to be \emph{modular} if it lies at a fixed point of the inversion symmetry,
\begin{equation}
\hat{d} = d\,,\qquad \hat{\mu}_i = \mu_i\,.
\end{equation}
\end{Def}
Comparison of the two Euler fields gives the following:
\begin{prop}
A Frobenius manifold is modular if and only if $d=1$ and $r^N=0.$
\end{prop}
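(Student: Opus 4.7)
The plan is to compare the Euler vector fields of the original and inverted Frobenius manifolds coefficient by coefficient, allowing a nontrivial $r^\sigma$ in order to recover both conditions stated in the proposition.

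First, the scalar relation $\hat d = 2-d$ from the preceding display shows that $\hat d = d$ holds precisely when $d = 1$. This supplies the first condition.

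Second, once $d = 1$ is imposed, the equalities $\hat\mu_i = \mu_i$ become automatic and produce no new constraint. Using the standard normalizations $q_1 = 0$ and $q_N = d$, together with $\mu_\alpha = q_\alpha - d/2$, substitution of $d = 1$ yields $\mu_1 = -1/2$ and $\mu_N = 1/2$, so that $\hat\mu_1 = \mu_N - 1 = -1/2 = \mu_1$ and $\hat\mu_N = \mu_1 + 1 = 1/2 = \mu_N$. The remaining identities $\hat\mu_i = \mu_i$ for $i\neq 1, N$ are tautological.

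Third, to extract the condition $r^N = 0$, I repeat the derivation of $\hat E$ without the blanket assumption $r^\sigma = 0$. With $d = 1$ we have $q_N = 1$, so the general Euler field \eqref{euvect1} may carry a nontrivial term $r^N \partial_N$ (the only candidate in the generic case). Applying $E$ to the inversion coordinates
\[
\hat t^1 = \frac{1}{2}\frac{t_\sigma t^\sigma}{t^N}, \qquad \hat t^\alpha = \frac{t^\alpha}{t^N}\;(\alpha\neq 1,N), \qquad \hat t^N = -\frac{1}{t^N},
\]
one finds that the pure-scaling piece of $E$ reproduces the displayed form of $\hat E$, while the $r^N \partial_N$ piece generates additional, non-homogeneous contributions (in particular, terms proportional to $(\hat t^N)^2$ in $E(\hat t^N)$) that are incompatible with the pure-scaling shape required of the Euler field of the inverted manifold at a fixed point. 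Matching $\hat E = E$ therefore forces $r^N = 0$. The converse is immediate: with $d = 1$ and $r^N = 0$, the displayed formula for $\hat E$ applies verbatim, and the spectrum identifications verified in the first two steps give $\hat E = E$.

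The main obstacle is the computation in the third step: carefully tracking how $r^N \partial_N$ propagates through the nonlinear change of coordinates defining the inversion symmetry, and checking that the resulting extra term cannot be absorbed into a pure-scaling Euler field. Steps one and two are essentially bookkeeping on the spectral data.
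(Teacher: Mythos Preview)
Your proposal is correct and follows exactly the approach the paper indicates: the paper's entire argument is the single phrase ``Comparison of the two Euler fields gives the following,'' and you have carried out precisely that comparison, extracting $d=1$ from $\hat d=2-d$, verifying that the $\hat\mu_i=\mu_i$ conditions then become automatic, and tracking the $r^N\partial_N$ term through the inversion to see that it produces the quadratic contribution $r^N(\hat t^N)^2$ in $\hat E^N$, which is incompatible with $\nabla\nabla\hat E=0$ unless $r^N=0$. Your write-up is more explicit than the paper's, but the underlying reasoning is the same.
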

Examples of such modular Frobenius manifolds were given in the introduction. The application of $I$ to such
a modular manifold does not yield a new prepotential but rather a modular tranformation of itself, i.e.
\begin{equation}
F(\hat{t}) = \left({\hat{t}}^N\right)^2 F\left(t({\hat{t}})\right) + \frac{1}{2} \hat{t}^1 {\hat{t}}_\sigma {\hat{t}}^\sigma\,.
\label{invariantF}
\end{equation}
\begin{example}\label{basicexample}
The prepotential
\[
F=\frac{1}{2} u^2 \tau - \frac{1}{2} u ({\bf z},{\bf z}) + f({\bf z},\tau)
\]
(where $({\bf x},{\bf y})$ is some inner product) satisfies \eqref{invariantF} under the action of $I$ if and only if
\[
f\left(\frac{{\bf z}}{\tau},-\frac{1}{\tau}\right) = \frac{1}{\tau^2} f({\bf z},\tau) - \frac{1}{4 \tau^3} ({\bf z},{\bf z})^2\,.
\]
In one dimension the ansatz $f(z,\tau)=z^4 \gamma(\tau)$ reduced this condition to
\[
f\left(-\frac{1}{\tau}\right) = \frac{1}{\tau^2} f(\tau) - \frac{1}{4} \tau\,.
\]
This is the transformation property of the solution to Chazy's equation. Note that the WDVV equations still have to
be satisfied - this argument just gives the modular transformation properties that the solution must satisfy.
\end{example}

\textbf{Remark}. There exists another choice of coordinates on a semi-simple Frobenius manifold, namely one that simplifies its algebraic structure. In these coordinates that multiplication is trivial:
\[
\frac{\partial~}{\partial u^i} \circ \frac{\partial~}{\partial u^j} = \delta_{ij} \frac{\partial~}{\partial u^i}\,.
\]
It turns out that these coordinates are the roots of
\begin{equation}\label{canondef}
\det\left(g^{\alpha\beta}(t)-u\,\eta^{\alpha\beta}\right)=0\,,
\end{equation}
where $g$ is the intersection form of the Frobenius manifold.  The roots of the expression \eqref{canondef} are invariant under the symmetry $I$, and so the canonical coordinates are preserved up to a re-ordering \cite{dubrovin1}.

\section{Inversion and Almost Duality}

Consider the vector field $E^{-1}$ defined by the condition
\[
E^{-1} \circ E = e\,.
\]
This is defined on $M^{\star}= M\backslash \Sigma\,,$ where $\Sigma$ is the discriminant submanifold
on which $E^{-1}$ is undefined. With this field one may define a new \lq dual\rq~multiplication
$\star: TM^\star \times TM^\star \rightarrow TM^\star$ by
\[
X \star Y = E^{-1} \circ X \circ Y\,, \qquad\qquad \forall\, X\,,Y \in TM^\star\,.
\]
This new multiplication is clearly commutative and associative, with the Euler vector field being the
unity field for the new multiplication.

Furthermore, this new multiplication is compatible with the intersection form $g$ on the Frobenius manifold,
i.e.
\[
g(X\star Y, Z) = g(X,Y\star Z)\,, \qquad\qquad \forall\, X\,,Y\,,Z \in TM^\star\,.
\]
Here $g$ is defined by the equation
\[
g(X,Y)=\eta(X\circ Y, E^{-1})\,, \qquad\qquad \forall\, X\,,Y \in TM^\star
\]
(and hence is well-defined on $M^\star\,$). Alternatively one may use the metric $\eta$ to extend the
original multiplication to the cotangent bundle and define
\[
g^{-1}(x,y) = \iota_E(x\circ y) \,, \qquad\qquad \forall\, x\,,y \in T^\star M^\star\,.
\]
The intersection form has the important property that it is flat, and hence there exists a distinguished
coordinate system $\{ p^i\,,i=1\,,\ldots\,,N\}$ in which the components of the intersection form are constant. 
It may be shown that there exists a dual prepotential $F^\star$ such that its third derivatives give the structure
functions $c^{\star}_{ijk}$ for the dual multiplication. More precisely \cite{dubrovin2}:

\begin{theorem} Given a Frobenius manifold $M$, there exists a function $F^\star$ defined on $M^\star$
such that:
\begin{eqnarray*}
c^{\star}_{ijk} & = &
g\left( \frac{\partial~}{\partial p^i}\star \frac{\partial~}{\partial p^j}\,, \frac{\partial~}{\partial p^k}
\right) \,,\\
& = &\frac{\partial^3 F^\star}{\partial p^i \partial p^j \partial p^k}\,.
\end{eqnarray*}
Moreover, the pair $(F^\star,g)$ satisfy the WDVV-equations in the flat coordinates $\{ p^i \}$ of the metric $g\,.$
\end{theorem}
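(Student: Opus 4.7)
The plan is to reduce the statement to three facts about the dual structure tensor $c^\star_{ijk}:=g\!\left(\partial_{p^i}\star\partial_{p^j},\,\partial_{p^k}\right)$, working throughout in flat coordinates $\{p^i\}$ of the intersection form $g$: first, that $c^\star_{ijk}$ is totally symmetric in its three indices; second, that $\partial_l c^\star_{ijk}$ is totally symmetric in all four indices; and third, that the raised tensor $c^{\star l}_{ij}$ defines an associative product. Given the first two, two successive applications of the Poincar\'e lemma on $M^\star$ yield a function $F^\star$ with $\partial_i\partial_j\partial_k F^\star = c^\star_{ijk}$, and the third is then precisely the WDVV equation for the pair $(F^\star,g)$.

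Verifying the first is quick: commutativity of $\star$ (inherited from that of $\circ$) gives symmetry of $c^\star_{ijk}$ in $(i,j)$, and the Frobenius compatibility $g(X\star Y,Z)=g(X,Y\star Z)$ stated in the excerpt gives symmetry in $(j,k)$. The third is also immediate, being associativity of $\star$, which follows from associativity of $\circ$ together with the definition $X\star Y = E^{-1}\circ X \circ Y$.

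The main obstacle is the second fact, the integrability condition for the existence of a potential. The natural route is to establish flatness in $z$ of the pencil of connections
\[
\widetilde\nabla^\star_X Y \;:=\; \nabla^g_X Y \;-\; z\, X\star Y, \qquad z\in\PP^1.
\]
Torsion-freeness is automatic from commutativity of $\star$, and expansion of the curvature in powers of $z$ reveals that the $z^0$-term is flatness of $\nabla^g$ (which we already have), the $z^1$-term is exactly the desired integrability of $\nabla^g c^\star$, and the $z^2$-term is associativity. So it suffices to prove that $\widetilde\nabla^\star$ is flat identically in $z$. The conceptual input is that $(\eta,g)$ form a flat pencil of metrics on $\cM$, with $g^{\alpha\beta}=E^\varepsilon c^{\alpha\beta}_\varepsilon$ compatible with $\eta^{\alpha\beta}$ in the Dubrovin sense.

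I would establish flatness of $\widetilde\nabla^\star$ either conceptually, by exhibiting a $z$-dependent gauge transformation --- built from the multiplication-by-$E$ operator --- intertwining $\widetilde\nabla^\star$ with the flat deformed connection $\widetilde\nabla$ of Section~\ref{gmeq}; or concretely, by verifying the three curvature identities directly in flat coordinates of $\eta$ using the explicit formula $c^{\star\alpha}_{\beta\gamma}=(E^{-1})^\delta c^\alpha_{\delta\varepsilon} c^\varepsilon_{\beta\gamma}$, the quasihomogeneity $\cL_E c = c$, and the Frobenius axioms. The technical heart of either approach is the bookkeeping for the Christoffel symbols of $g$ in the $\eta$-flat chart, which are expressible in terms of $c$ and the scaling data of $E$; once this is in hand, the two Poincar\'e lemma steps producing $F^\star$ are routine, and the third fact gives WDVV for $(F^\star,g)$.
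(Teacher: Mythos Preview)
The paper does not actually supply a proof of this theorem: it is quoted as a result of Dubrovin, with a citation to \cite{dubrovin2}, and no argument is given in the text beyond the sentence ``It may be shown that there exists a dual prepotential $F^\star$ \ldots''. So there is no in-paper proof against which to compare your proposal.

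That said, your plan is correct and is essentially the strategy carried out in \cite{dubrovin2}. The three-step reduction --- total symmetry of $c^\star_{ijk}$, total symmetry of $\nabla^g c^\star$, and associativity of $\star$ --- is exactly right, and your identification of the flatness of the dual deformed connection $\widetilde\nabla^\star_X Y = \nabla^g_X Y - z\,X\star Y$ as the mechanism that packages the $z^1$ integrability condition is the standard route. Dubrovin establishes this flatness via the flat-pencil/quasihomogeneity data, essentially the concrete verification you outline in your second option; the gauge-transformation picture you mention (intertwining with the original deformed connection via the multiplication-by-$E$ operator) is also present in \cite{dubrovin2} and is what underlies the Laplace-transform relation between deformed flat coordinates and twisted periods invoked later in this paper. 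Either route closes the argument.
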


Thus given a specific Frobenius manifold one may construct a \lq dual\rq~solution
to the WDVV-equations by constructing the flat-coordinates of the intersection
form and using the above result to find the tensor $c^{\star}_{ijk}$ from which the
dual prepotential may be constructed.

Recall the following:

\begin{Def}
A function $\mathrm{p}=\mathrm{p}(t;\lambda)$ is called a $\lambda$-period of the Frobenius manifolds
if it satisfies the Gauss-Manin equations
\[
\left({}^g \nabla \,\,-\lambda \,\,{}^{\eta} \nabla \right) d\mathrm{p}=0\,.
\]
\end{Def}
\noindent The following fact were proved in \cite{dubrovin2}:
\begin{itemize}
\item[$\bullet$] $p=\mathrm{p}(t;0)$ is a flat coordinate for the intersection form $g\,;$
\item[$\bullet$] $\mathrm{p}(t;\lambda) = p(t^1-\lambda\,,t^2\,,\ldots\,,t^N)\,.$
\end{itemize}
To understand the action of $I$ in the dual picture it is first necessary to understand the relationship
between the two $\lambda$ periods $\hat{\mathrm{p}}$ and $\mathrm{p}$, or just the relationship
between the flat coordinates of $g$ and $\hat{g}\,.$

\begin{lemma}\label{basiclemma}  Let $p$ be a flat coordinate for $g$ with the property $E(p)=\left(\frac{1-d}{2}\right)p\,.$
Then
\[
{\hat p}=\frac{p}{t_1}
\]
is a flat coordinate for ${\hat g}\,.$ Moreover
\[
{\hat{\mathrm{p}}}=\frac{\mathrm{p}}{t_1}
\]
\end{lemma}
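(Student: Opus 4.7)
The plan is to prove the first assertion by relating the intersection form $\hat g$ of $\hat F$ to $g$ via a conformal rescaling, and then to verify that the substitution $p\mapsto p/t_1$ is exactly what absorbs it. The second assertion will then follow quickly from the first.

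First I would derive the transformation law for $\hat g$. Using $g^{\alpha\beta}=E^\gamma\eta^{\alpha\mu}\eta^{\beta\nu}c_{\mu\nu\gamma}$, the analogous formula for $\hat g$ in terms of $\hat E$, $\hat\eta=\eta$ and $\hat c$, together with the structure-constant transformation \eqref{ctrans1} and the explicit inversion $\hat t^\alpha(t)$, I expect to find that $\hat g$, pulled back to the $t$-coordinates, is the rescaling of $g$ by an appropriate power of $t_1=t^N$. This is the multidimensional analogue of the automorphic factor $(c\tau+d)^2$ appearing in the Chazy transformation of Example \ref{basicexample}, with $c=1$, $d=0$ corresponding to $\tau\mapsto -1/\tau$.

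Given this conformal relation, the Levi-Civita connection of $\hat g$ differs from that of $g$ by explicit terms involving $d\log t_1$. Substituting $\hat p=p/t_1$ into ${}^{\hat g}\hat\nabla\, d\hat p=0$ and expanding produces correction terms proportional to $E(p)$ and to $p$; the hypothesis $E(p)=\tfrac{1-d}{2}p$ is precisely what forces these to cancel. In other words, the Euler weight of $p$ matches the conformal weight of the rescaling, which is exactly why the specific exponent $(1-d)/2$ enters the hypothesis and no other weight would work.

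For the moreover statement, the antidiagonal form of $\eta$ (so that $t_1=t^N$ and $t_N=t^1$) together with $\hat t^1=\tfrac12 t_\sigma t^\sigma/t^N$ yields $\partial\hat t^1/\partial t^1=1$, while $\hat t^\alpha$ for $\alpha\neq 1$ is independent of $t^1$. Hence the shift $t^1\mapsto t^1-\lambda$ transcribes verbatim to $\hat t^1\mapsto \hat t^1-\lambda$ with the other coordinates unchanged. Combining this with the shift formula $\mathrm p(t;\lambda)=p(t^1-\lambda,t^2,\ldots,t^N)$ and the already established $\hat p=p/t_1$ gives $\hat{\mathrm p}=\mathrm p/t_1$ immediately.

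The main obstacle I anticipate is the first step: extracting the conformal rescaling of $g$ cleanly requires careful bookkeeping of \eqref{ctrans1}, the Jacobians of $I$, and the $\hat\mu$-shifted Euler field $\hat E$, all pulled back to the $t$-coordinates. A useful sanity check is that the derivation should reproduce $\tau^2$ as the rescaling factor in the one-dimensional Chazy setting, matching the classical $(c\tau+d)^2$ behaviour of periods under $SL(2,\ZZ)$.
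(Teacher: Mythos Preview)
Your approach is correct and takes a genuinely different route from the paper's. The paper works entirely in canonical coordinates: it records $\hat H_i=H_i/t_1$, $\hat t_1=-1/t_1$, $\hat\beta_{ij}=\beta_{ij}-H_iH_j/t_1$, writes both connections ${}^g\nabla$ and ${}^{\hat g}\nabla$ in terms of this Darboux--Egorov data, and then checks ${}^{\hat g}\nabla d(p/t_1)=0$ by brute force. Your conformal-rescaling argument is more structural. Two remarks that will make it go smoothly: first, the conformal relation $\hat g=t_1^{-2}g$ is much easier to read off in canonical coordinates (it is immediate from $\hat H_i=H_i/t_1$ and $\hat u_i=u_i$) than from \eqref{ctrans1} and Jacobians in flat coordinates, so you may want to borrow that one fact from the paper's setup even if you do the rest coordinate-free. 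Second, the two identities your computation will need are $\mathrm{grad}_g\,t_1=E$ (which follows from $g^{ij}=u_i\delta^{ij}/H_i^2$ and $H_i^2=\partial_i t_1$) and ${}^g\nabla dt_1=\tfrac{1-d}{2}\,g$ (equivalent to ${}^g\nabla_\alpha E^\beta=\tfrac{1-d}{2}\delta^\beta_\alpha$, which is derived later in the paper but is a standard Dubrovin fact you can invoke directly). With these in hand the conformal Hessian formula collapses to ${}^{\hat g}\nabla d\hat p = t_1^{-2}\bigl(\tfrac{1-d}{2}p-E(p)\bigr)g$, which makes transparent why exactly that Euler weight is the hypothesis---something the canonical-coordinate calculation leaves opaque. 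Your argument for the ``moreover'' part via the $t^1$-shift is clean and correct.
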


\begin{proof}
Recall that a flat coordinate $p$ for the metric $g$ must satisfy the equation ${}^g\nabla dp=0\,.$ Thus one
just has to calculate ${}^{\hat{g}}\nabla d{\hat{p}}.$ In canonical coordinates
\[
\eta=\sum_{i=1}^N H_i^2({\bf u}) du_i^2\,, \qquad g=\sum_{i=1}^N \frac{H_i^2({\bf u})}{u_i} du_i^2
\]
with Egorov potential $t_1\,,$ i.e. $H_i^2=\partial_i t_1\,.$ In terms of the dual objects, ${\hat{u}}_i=u_i$ and
\[
{\hat{H}}_i=\frac{H_i}{t_1}\,, \qquad \hat{t}_1=-\frac{1}{t_1}
\]
and the rotation coefficients satisfy the relation ${\hat{\beta}}_{ij}=\beta_{ij} - \frac{H_i H_j}{t_1}\,.$
Hence the connections ${}^{\hat{g}}\nabla$ and ${}^{{g}}\nabla$  can be expressed in terms of the Darboux-Egorov data for $\eta$ and the
result follows via straightforward calculations.
\end{proof}
We now move from a flat coordinate to flat coordinate systems. It is here that the differences between the cases
$d\neq 1$ and $d=1$ becomes apparent.

\begin{prop}\label{dneqone} Let $\{p^i\,:i=1\,,\ldots\,,N\}$ be a flat coordinate system for $g\,.$ If $d\neq 1$ then
\[
\left\{{\hat{p}}^i=\frac{p^i}{t_1} \,:i=1\,,\ldots\,,N\right\}
\]
is a flat coordinate system for ${\hat{g}}\,.$ Moreover the normalization conditions are preserved, and the Gram
matrices coincide.
\end{prop}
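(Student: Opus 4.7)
The plan is to reduce Proposition \ref{dneqone} to Lemma \ref{basiclemma} applied componentwise, and then to identify the Gram matrix via a direct calculation. First, when $d\neq 1$ the flat coordinates of $g$ may be chosen so that each $p^i$ is a homogeneous eigenfunction of $E$ with common weight $\tfrac{1-d}{2}$, i.e.\ $E(p^i)=\tfrac{1-d}{2}p^i$. These are the distinguished periods of the Gauss--Manin system; for $d\neq 1$ there is no resonance with the weight $(1-d)$ of the Egorov potential $t_1$, so such a basis is free of logarithms. When $d=1$ the two weights coincide and this step fails; this is the first essential use of the hypothesis.

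Given this basis, Lemma \ref{basiclemma} applies to each $p^i$ separately and shows that every $\hat p^i=p^i/t_1$ is a flat coordinate for $\hat g$. For functional independence I would compute
\[
\frac{\partial\hat p^i}{\partial p^j}=\frac{1}{t_1}\bigl(\delta^i_j-\hat p^i\,\partial_{p^j}t_1\bigr),
\]
whose determinant reduces via the matrix determinant lemma to $-t_1^{-N}$ after using $p^j\partial_{p^j}t_1=2t_1$ (a consequence of $E(t_1)=(1-d)t_1$ combined with $E=\tfrac{1-d}{2}p^j\partial_{p^j}$; dividing by $(1-d)/2$ is the second use of $d\neq 1$). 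Hence $\{\hat p^i\}$ is a coordinate system, and consequently $\hat g$ has a constant Gram matrix $\hat\gamma_{ij}$ in these coordinates.

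The main obstacle is identifying $\hat\gamma_{ij}$ with $\gamma_{ij}$. Using $\hat g=t_1^{-2}g$ from the proof of Lemma \ref{basiclemma} together with $p^a=t_1\hat p^a$, the transformation law yields
\[
\hat g_{ij}(\hat p)=\gamma_{ij}+\frac{V_is_j+V_js_i}{t_1}+\frac{S s_is_j}{t_1^2},\quad V_i:=\gamma_{ib}\hat p^b,\ s_i:=\partial_{\hat p^i}t_1,\ S:=\gamma_{ab}\hat p^a\hat p^b.
\]
Since the left-hand side is constant in $\hat p$, so is the entire correction; the task is to show this constant vanishes. I would do this using the chain-rule identity $s_i=-t_1\,\partial_{p^i}t_1$ together with the Euler-homogeneity relation $s_i\hat p^i=-2t_1$ (obtained by applying $E=-w\hat p^a\partial_{\hat p^a}$ to $t_1$ and equating with $(1-d)t_1$, legal because $w=\tfrac{1-d}{2}\neq 0$). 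These reduce the vanishing of the correction to the identity $\gamma_{ab}p^ap^b\propto t_1$, equivalently $g(E,E)\propto t_1$, which is standard for semisimple Frobenius manifolds with $r^\sigma=0$. Preservation of the normalization conditions is then immediate, as they are conditions on the now-common matrix $\gamma$.
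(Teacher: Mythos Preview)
Your approach is essentially that of the paper: invoke the normalization $E(p^i)=\tfrac{1-d}{2}\,p^i$ available when $d\neq1$ (cf.\ \cite{dubrovin2}), apply Lemma~\ref{basiclemma} componentwise, check the Jacobian is nonvanishing via the identity $t_1=\tfrac{1-d}{2}\,g_{ab}p^ap^b$, and then verify $\hat g_{ab}=g_{ab}$. Your Gram-matrix computation is more explicit than the paper's (which simply calls the step ``straightforward''), and your Jacobian value $-t_1^{-N}$ is in fact correct---the paper records $-t_1^{-(N+1)}$, which appears to be a misprint, though of course either value is nonzero and suffices.
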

\begin{proof}
In \cite{dubrovin2} it was shown that if $d\neq 1$ then $p$ may be normalized so $E(p)=\left(\frac{d-1}{2}\right)p\,,$ so the above Lemma
shows that the ${\hat{p}}^i$ are flat. To show that they form a coordinate {\em system} one must calculate the Jacobian
of the transformation. A simple calculation gives
\[
\frac{\partial({\hat{p}^1}\,,\ldots\,,{\hat{p}}^N)}{\partial(p^1\,,\ldots\,,p^N)} = - \frac{1}{t_1^{N+1}}
\]
on using the result \cite{dubrovin2} (again for $d\neq 1$) that $t_1=\left( \frac{1-d}{2} \right) g_{ab}p^a p^b\,.$
The calculation of the dual normalization conditions and Gram matrices is also straightforward and yields the relation ${\hat{g}}_{ab}=g_{ab}\,.$
\end{proof}

\noindent When $d=1$ certain special results hold.

\begin{lemma} Suppose $d=1\,.$ Then:
\begin{itemize}
\item[(i)] ${}^g \nabla E=0\,;$
\item[(ii)] $t_1$ is a flat coordinate for both $\eta$ and $g\,.$
\end{itemize}
\end{lemma}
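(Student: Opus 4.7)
The plan is to work in canonical coordinates $(u^1,\ldots,u^N)$, in which the setup of the proof of Lemma \ref{basiclemma} applies: both metrics diagonalise as
$$\eta = \sum_i H_i^2\,du_i^2, \qquad g = \sum_i \frac{H_i^2}{u_i}\,du_i^2,$$
with Egorov potential $H_i^2 = \partial_i t_1$, and the Euler field is $E = \sum_i u_i\partial_{u_i}$. The $\eta$--part of each statement is essentially free: $t_1=\eta_{1\alpha}t^\alpha$ is linear in flat coordinates of $\eta$, and $\nabla\nabla E=0$ is an axiom. The content is the $g$--part.

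The only input that uses the hypothesis $d=1$ is the following scaling identity. Because $\eta_{1N}\neq 0$ forces $q_1+q_N=d$, and $q_1=0$, we have $q_N=d=1$. Then \eqref{euvect1} (with $r^\sigma=0$) gives $E(t^N)=(1-q_N)t^N=0$, i.e.\ $E(t_1)=0$. Differentiating this with respect to $u_i$ and using $E=\sum_k u_k\partial_{u_k}$ yields
$$E(H_i^2) = -H_i^2.$$

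For (ii) I would compute ${}^g\nabla dt_1$ componentwise in canonical coordinates, using the standard Christoffel symbols for the diagonal (not Egorov) metric $g$. The off-diagonal components collapse to $\tfrac{1}{2}(\partial_j H_i^2-\partial_i H_j^2)$, which vanishes by the Egorov property $\partial_j\partial_i t_1=\partial_i\partial_j t_1$; the diagonal components collapse to $\tfrac{1}{2u_i}(H_i^2+E(H_i^2))$, which vanishes by the scaling identity. Hence $t_1$ is a flat coordinate for $g$.

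For (i) the same Christoffel symbols give $({}^g\nabla_j E)^i = \delta^i_j + \sum_k {}^g\Gamma^i_{jk}u_k$, and a parallel calculation produces $\tfrac{u_i}{2H_i^2}(\partial_j H_i^2-\partial_i H_j^2)=0$ off the diagonal and $\tfrac{1}{2}+E(H_i^2)/(2H_i^2)=0$ on the diagonal, so $E$ is $g$--parallel. The only real obstacle is the bookkeeping in the Christoffel expansion of $g$; once that is in place, parts (i) and (ii) fall out together, driven in each case by exactly the same two inputs—the Egorov relation for $\eta$ and the $d=1$ identity $E(t_1)=0$.
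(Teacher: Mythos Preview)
Your proof is correct and is precisely the canonical-coordinate computation the paper itself offers as one of two routes; the only difference is organisational: the paper phrases the key scaling input as the general identity ${}^g\nabla_i E^i = \tfrac{1-d}{2}$ (equivalently $E(H_i^2)=-d\,H_i^2$, from $\mathcal{L}_E\eta=(2-d)\eta$) and then sets $d=1$, whereas you derive $E(H_i^2)=-H_i^2$ from $E(t_1)=0$, which uses the standing hypothesis $r^\sigma=0$ that the paper has already imposed globally. The paper also records an alternative flat-coordinate argument via Dubrovin's formula ${}^g\Gamma^{\alpha\beta}_\gamma=c^{\alpha\epsilon}_\gamma\bigl(\tfrac{1}{2}-\mathcal{V}\bigr)^\beta_\epsilon$, but your canonical-coordinate version is equally complete.
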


\begin{proof} The proof is by direct computation. For example, in flat coordinates $\{ t^i\,,i=1\,,\ldots\,,N\}$ for $\eta$ the Christoffel
symbols for $g$ take the form \cite{dubrovin1,dubrovin2}
\[
{}^g \Gamma^{\alpha\beta}_\gamma = c^{\alpha\epsilon}_\gamma \left( \frac{1}{2} - \mathcal{V}\right)^\beta_\epsilon
\]
and using this one finds that ${}^g\nabla_\alpha E^\beta=\left(\frac{1-d}{2}\right) \delta^\beta_\alpha\,.$ Alternatively, in
canonical coordinates,
\begin{eqnarray*}
{}^g \nabla_i E^i & = & 0 \,, \qquad i\neq j\,,\\
{}^g \nabla_i E^i & = & \frac{(1-d)}{2} \,,\qquad ({\rm no~sum})\,.
\end{eqnarray*}
The proof that $t_1$ is flat for $g$ (i.e. ${}^g\nabla dt_1=0$) is similar.
\end{proof}

\noindent Recall that $E$ plays the role of the unity vector field in the dual picture. Thus when $d=1$ the unity vector
field is covariantly constant, so almost dual Frobenius manifolds at $d=1$ are even closer to Frobenius manifolds than for those
with $d\neq1\,.$ We will return to this point later.

\begin{theorem} Suppose $d=1$ and $r_N=0$ and let $\{p^i\,:i=1\,,\ldots\,,N\}$ be a flat coordinate system for $g\,.$
Let
\begin{displaymath}
\hat{p}^1  = \frac{1}{2}\frac{p_{\sigma}p^{\sigma}}{p^N}, \quad \hat{p}^{\alpha} = \frac{p^{\alpha}}{p^N}, \quad\mbox{  (for }\alpha \neq 1, N), \quad \hat{p}^N = -\frac{1}{p^N},
\end{displaymath}
where $p^N=t_1\,.$ Then
$\{{\hat{p}}^i\,:i=1\,,\ldots\,,N\}$ are a flat coordinate system for ${\hat{g}}\,.$
\end{theorem}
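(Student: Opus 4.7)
The plan is to verify flatness of the three types of coordinates in the inversion formula separately, exploiting the enabling observation from the preceding lemma that when $d=1$, $t_1$ is itself a flat coordinate for $g$ and $E$ is covariantly constant for $g$. I would therefore choose a flat coordinate system $\{p^\alpha\}$ of $g$ with the mixed-weight normalizations $p^N=t_1$ (so $E(p^N)=p^N$) and $E(p^\alpha)=0$ for $\alpha\neq N$; this replaces, at $d=1$, the uniform normalization $E(p)=\tfrac{d-1}{2}p$ used in Proposition~\ref{dneqone}. Such a split is possible precisely because $t_1$ is flat for $g$ and the remaining $N-1$ flat directions can be taken to be classical $\lambda=0$ periods, which transform trivially under $E$.

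For $\hat p^N=-1/p^N$: since $p^N=t_1$ and $\hat d=2-d=1$, applying the preceding lemma to the dual Frobenius manifold $\hat F$ shows that $\hat t_1=-1/t_1=\hat p^N$ is flat for $\hat g$. For $\hat p^\alpha=p^\alpha/t_1$ with $\alpha\neq 1,N$: since $E(p^\alpha)=0=\tfrac{1-d}{2}p^\alpha$, Lemma~\ref{basiclemma} applies verbatim. These two steps dispose of $N-1$ of the new coordinates using tools already in place, and they mirror the structure of the inversion symmetry on the $t$-coordinates in the $\eta$-picture.

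The genuinely new input is required for $\hat p^1=\tfrac{1}{2}p_\sigma p^\sigma/p^N$, which is nonlinear in the $p^\alpha$ and hence outside the scope of Lemma~\ref{basiclemma}. I would verify its flatness by working in canonical coordinates and reusing the key identity $\hat\beta_{ij}=\beta_{ij}-H_iH_j/t_1$ from the proof of Lemma~\ref{basiclemma}. A candidate flat coordinate $\psi$ for $\hat g$ corresponds to wave functions $\hat\psi_i=\hat H_i^{-1}\partial_i\psi$ satisfying the Darboux--Egorov wave equations for the deformed rotation coefficients $\hat\beta_{ij}$. Substituting $\psi=\hat p^1$ and expanding $\partial_i(p_\sigma p^\sigma)=2p_\sigma\partial_i p^\sigma=2H_i p_\sigma\psi^\sigma_i$, the claim reduces to an algebraic identity among the wave functions $\psi^\sigma_i$ of the original flat coordinates, combined with the Egorov relation $\partial_i t_1=H_i^2$. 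The main obstacle will be packaging this last calculation cleanly: one must exploit the degenerate structure of the constant Gram matrix $g_{\alpha\beta}$ forced by the weight assignment (weight $1$ on $p^N$, weight $0$ on the remaining $p^\alpha$), since it is precisely this structure that makes $\hat p^1$ well-defined as a new flat coordinate and should, in addition, yield the identification $\hat g_{ab}=g_{ab}$ expected by analogy with Proposition~\ref{dneqone}.
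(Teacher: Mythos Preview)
Your overall plan matches the paper's: use Lemma~\ref{basiclemma} for the middle coordinates $\hat p^\alpha=p^\alpha/p^N$, and handle the two endpoint coordinates separately, with the verification for $\hat p^1$ done by a direct Darboux--Egorov computation in canonical coordinates. Your shortcut for $\hat p^N$ (apply the preceding lemma to $\hat F$, which also has $\hat d=1$, to see that $\hat t_1=-1/t_1$ is flat for $\hat g$) is a genuine simplification over the paper's ``by direct calculation'' at that point.

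However, the weight bookkeeping in your setup is wrong, and this hides the place where the hypothesis $r_N=0$ enters. Since $p^N=t_1=t^N$ and $q_N=d=1$, one has $E(p^N)=(1-d)t^N=0$, not $E(p^N)=p^N$. Moreover you cannot have $E(p^\alpha)=0$ for \emph{all} $\alpha\neq N$: because ${}^g\nabla E=0$, the Euler field is a nonzero constant vector field in the $p$-coordinates, and $E(p^N)=0$ already forces $E=\sum_{i<N}c_i\,\partial_{p^i}$; after a linear change one takes $E=\partial_{p^1}$, so $E(p^1)=1$. This is exactly how the paper arranges the coordinates. The antidiagonal form $g_{ab}=\delta_{a+b,N+1}$ is then \emph{not} a consequence of any weight grading (there is none at $d=1$) but of the computation $g(E,E)=\eta(e,E)=r_N=0$, which gives $g_{11}=0$ and hence, via Lemma~1.1 of \cite{dubrovin1}, the desired Gram matrix. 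Your proposal never invokes $r_N=0$, so your derivation of the ``degenerate structure of $g_{\alpha\beta}$'' needed in the $\hat p^1$ calculation is not justified as written; replace the weight argument by the $g(E,E)=r_N$ step and the rest of your outline goes through.
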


\begin{proof} From the above lemma $t_1$ is a flat coordinate for $g$ and hence we choose $p^N=t_1\,.$ With this $E(p^N)=0$ and since
${}^g\nabla E=0\,,$ the vector field $E$ must take the form $E=\sum_{i=1}^{N-1} c_i \frac{\partial~}{\partial p^i}$ for some constants $c_i\,.$
Using the freedom to redefine the $p^i$ for $i\neq N$ one may set
\[
E=\frac{\partial~}{\partial p^1}\,.
\]
With this $g(E,E)=\eta(E^{-1} \circ E,E) = \eta(e,E)=r_N$ (again since $d=1$). Thus from \cite{dubrovin1} Lemma 1.1 one may
redefine coordinates so
\[
g_{ab}=\delta_{a+b,N+1}
\]
in the $\{p^i\}$-coordinates. Since $E(p^i)=0.p^i$ for $i=2\,,\ldots\,,N-1$ one may use Lemma \ref{basiclemma} to show that
${\hat{p}}^i=p^i/p^N$ are flow coordinates. By direct calculation - for example, in canonical coordinates - one may show that ${\hat{p}}^1$ and ${\hat{p}}^N$ define
above are also flat coordinates (again, one just has to show ${}^{\hat{g}} \nabla d{\hat p}=0\,$). It also follows immediately that
${\hat{g}}_{ab}=g_{ab}\,.$
\end{proof}
It remains to compare the two induced prepotentials $F^\star$ and ${\hat{F^\star}}$ and the corresponding multiplications.
\begin{theorem}
Let $F$ define a Frobenius manifold and let $\hat{F}$ denote the induced manifold under the action of the symmetry $I\,.$ Let $F^\star$ and
$\hat{F^\star}$ denote the corresponding almost dual structures. The $I^\star$, the induced symmetry act as:

\begin{itemize}

\item{Case I: $d \neq 1\,:$}

\begin{eqnarray*}
{\hat p}^i & = & \frac{p^i}{t_1} \,, \qquad i=1\,,\ldots\,, N\,,\\
{\hat{g}}_{ab} & = & g_{ab} \,, \\
{\hat{F^\star}}({\hat{\mathbf{p}}}) & = & 
\frac{F^\star(\mathbf{p}({\hat{\mathbf{p}}}))}{t_1^2}
\end{eqnarray*}
where $t_1=\frac{(1-d)}{2} g_{ab} p^a p^b\,.$

\medskip

\item{Case II: $d=1\,:$}

\begin{eqnarray*}
{\hat p}^1 & = & \frac{1}{2}\frac{p_{\sigma}p^{\sigma}}{t_1}\,,\quad
{\hat p}^i  =  \frac{p^i}{t_1} \,, \qquad i=2\,,\ldots\,, N-1\,,\quad
{\hat p}^N  =  -\frac{1}{t_1}\,,\\
{\hat{g}}_{ab} & = & g_{ab} \,, \\
{\hat{F^\star}}(\hat{\mathbf{p}}) & = & (\hat{p}^N)^2 F \left( \mathbf{p}({\hat{\mathbf{p}}}) \right) + \frac{1}{2}\hat{p}^1\hat{p}_{\sigma}\hat{p}^{\sigma}\,,
\end{eqnarray*}

\noindent where $t_1=p^N\,.$

\medskip

\end{itemize}
\noindent Note, in both cases $p^N$ is the Egorov potential for the metric $\eta\,.$

\end{theorem}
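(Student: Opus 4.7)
The theorem comprises three assertions in each case: the transformation of the flat coordinates, the preservation of the Gram matrix $\hat g_{ab}=g_{ab}$, and the explicit formula for the almost dual prepotential. The first two are immediate from earlier results: Proposition \ref{dneqone} delivers them in Case I, and the theorem immediately preceding this one delivers them in Case II. So the real task is to establish the prepotential identities.

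The natural strategy is to note that an almost dual prepotential is determined, up to a polynomial of degree at most two (which is irrelevant for WDVV), by its third partial derivatives in flat coordinates for the intersection form. Thus it suffices to verify that the proposed $\hat F^\star(\hat{\mathbf p})$ has third $\hat p$-derivatives matching the dual structure functions $\hat c^\star_{ijk}$ of the inverted manifold. I would compute these structure functions in canonical coordinates, which are preserved up to permutation by $I$. Starting from the diagonal expression
\[
c^\star(\partial_{u_i},\partial_{u_j},\partial_{u_k})=\frac{H_i^2}{u_i^2}\,\delta_{ij}\delta_{jk},
\]
and inserting the Darboux--Egorov data transformation $\hat H_i=H_i/t_1$ recorded in the proof of Lemma \ref{basiclemma}, I obtain the key scalar relation
\[
\hat c^\star(\partial_{u_i},\partial_{u_j},\partial_{u_k})=\frac{1}{t_1^{2}}\,c^\star(\partial_{u_i},\partial_{u_j},\partial_{u_k}).
\]

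In Case I the coordinate change $\hat p^i=p^i/t_1$ is homogeneous. Combining the scaling $E(p^i)=\tfrac{1-d}{2}p^i$ from Proposition \ref{dneqone} with the identity $t_1=\tfrac{1-d}{2}g_{ab}p^ap^b$ and the quasi-homogeneity of $F^\star$ inherited from the Euler field, a direct chain-rule calculation shows that the three $\hat p$-derivatives of $F^\star(\mathbf p(\hat{\mathbf p}))/t_1^{2}$ reproduce the canonical-coordinate relation above; the quadratic-polynomial ambiguity is killed by matching scaling weights, giving the claimed formula.

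In Case II the algebraic shape of the coordinate change $\hat{\mathbf p}\leftrightarrow\mathbf p$ is identical to that of the inversion symmetry applied to the original $t$-coordinates, with $p^N=t_1$ taking the role of $t^N$. This is possible only because the preceding lemma supplies two special features of the $d=1$ setting: $t_1$ is simultaneously a flat coordinate for $\eta$ and for $g$, and the Euler field $E$ is $^g$-covariantly constant. Together these allow the Gram matrix to be normalized anti-diagonally and give the almost dual structure a geometry very close to that of a genuine Frobenius manifold, so the induced symmetry $I^\star$ takes the same algebraic form as $I$ itself. The anticipated principal obstacle is then a bookkeeping exercise: one must push the canonical-coordinate identity $\hat c^\star=c^\star/t_1^{2}$ through the nonlinear substitution $\hat p^1=\tfrac12 p_\sigma p^\sigma/t_1,\ \hat p^N=-1/t_1$, and verify that the quadratic correction $\tfrac12\hat p^1\hat p_\sigma\hat p^\sigma$ exactly absorbs the non-cubic contributions generated by the Jacobian, producing the stated formula. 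This parallels Dubrovin's original derivation of the inversion formula \eqref{isdef} for the prepotential itself, the present setting being distinguished by the replacement of the pair $(t^i,\eta)$ by $(p^i,g)$.
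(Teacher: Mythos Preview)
Your proposal is correct and arrives at the same conclusion as the paper, but the route in Case~I is organized differently. The paper works entirely in the flat $p$-coordinates: it invokes the quasi-homogeneity identity
\[
\sum_\alpha p^\alpha \frac{\partial F^\star}{\partial p^\alpha} = 2 F^\star + \frac{1}{1-d} g_{\alpha\beta} p^\alpha p^\beta
\]
and, combining this with the explicit change of variables $\hat p^i = p^i/t_1$ from Proposition~\ref{dneqone}, differentiates the candidate $\hat{F^\star}=F^\star/t_1^2$ three times directly, obtaining a closed formula for $\partial_{\hat p^\alpha}\partial_{\hat p^\beta}\partial_{\hat p^\gamma}\hat{F^\star}$ in terms of $\partial_{p^\alpha}\partial_{p^\beta}\partial_{p^\gamma}F^\star$ plus explicit correction terms; WDVV is then checked from this expression. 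Your approach instead passes through canonical coordinates to extract the tensorial identity $\hat c^\star = c^\star/t_1^{2}$ from the Darboux--Egorov relation $\hat H_i = H_i/t_1$, and only then transfers to flat $\hat p$-coordinates. Your version makes the geometric source of the scaling manifest and arguably pins down more directly that the proposed $\hat{F^\star}$ really is the almost dual prepotential of $\hat F$ (rather than merely some WDVV solution); the paper's version is more self-contained and never leaves the flat coordinate system. For Case~II you and the paper coincide: both simply note that, once the preceding lemma and theorem have set up $p^N=t_1$, $E=\partial_{p^1}$, and the anti-diagonal Gram matrix, the computation is formally identical to Dubrovin's original derivation of the inversion formula for $F$ itself, with $(p^i,g)$ in place of $(t^i,\eta)$.
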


\begin{proof}

Recall \cite{dubrovin2} that for $d\neq 1$ the dual prepotential satisfies the homogeneity condition
\[
\sum_\alpha p^\alpha \frac{\partial F^\star}{\partial p^\alpha} = 2 F^\star + \frac{1}{1-d} g_{\alpha\beta} p^\alpha p^\beta\,.
\]
Using this and the explicit coordinates given in Proposition \ref{dneqone} one finds that
\[
\frac{\partial^3 {\hat{F^\star}}}{\partial{\hat{p}}^\alpha \partial{\hat{p}}^\beta \partial{\hat{p}}^\gamma} =
t_1 \frac{\partial^3 {{F^\star}}}{\partial{{p}}^\alpha \partial{{p}}^\beta \partial{{p}}^\gamma}
-\frac{2}{1-d} (g_{\alpha\beta} p_\gamma+g_{\alpha\gamma} p_\beta+g_{\beta\gamma} p_\alpha) + \frac{2}{t_1(1-d)} p_\alpha p_\beta p_\gamma\,.
\]
From this it is straightforward to show that $\hat{F^\star}$ satisfies the WDVV equations in the $\{ p^i\}$-variables.

\medskip

If $d=1$ the proof is identical to the original inversion symmetry as presented in \cite{dubrovin1}.

\end{proof}

It remains to show how the deformed flat coordinates of both ${}^\eta {\widetilde{\nabla}}$ and ${}^g {\widetilde{\nabla}}$
behave under inversion symmetry. This will be done in the next section. We will end this section with some examples.

\begin{example}[$d\neq 1$] Given an irreducible Coxeter group $W$ of rank $N\,,$ the Saito construction gives a Frobenius manifold
structure on the orbit space $\mathbb{C}^N/W\,.$ The almost dual prepotential takes the form
\[
F^\star({\bf p}) = \frac{1}{4} \sum_{\alpha\in R_W} (\alpha,{\bf p}) \log (\alpha,{\bf p})^2
\]
where $(,)$ is the metric $g\,.$

Application of the $I^\star$ transform (recall $d\neq 1$ for these examples) yields the solution
\[
{\hat{F^\star}}({\hat{\bf p}}) = \frac{1}{4} \sum_{\alpha\in R_W} (\alpha,{\bf {\hat{p}}}) \log (\alpha,{\hat{{\bf p}}})^2 - \frac{h}{4} ({\hat{{\bf p}}},{\hat{{\bf p}}})\log({\hat{{\bf p}}},{\hat{{\bf p}}})
\]
where $h$ is defined by the relation $\sum_{\alpha\in R_W} (\alpha,{\bf z}) = h ({\bf h},{\bf h})$ (and hence depends on the normalization of the
roots $\alpha\in R_W\,).$

Thus the original solution is recovered but with the addition of a new radial term. Such solution have been constructed directly (i.e. without
knowledge of its geometric origins) in \cite{LP}. The transformation property hold more generally than just for dual solutions, and hence may also be applied to the $\vee$-systems \cite{Veselov}.

\end{example}

\begin{example}[$d=1$] Given the Weyl group $A_N$ and $B_n$ with Lie algebra $\mathfrak{g}$ with Cartan subalgebra $\mathfrak{h}$
one may construct the so-called Jacobi group $J(\mathfrak{h})$ and orbit space $\Omega/J(\mathfrak{g})$ where
$\Omega=\mathbb{C}\oplus {\mathfrak h} \oplus \mathbb{H}\,.$ This orbit space carries the structure of a Frobenius
manifold and the dual prepotential takes the form

\begin{eqnarray*}
F^\star(u\,,{\bf z}\,,\tau) & = &
\frac{1}{2} \tau u^2 -
\frac{1}{2} u ({\bf z},{\bf z})+\sum_{\alpha\in\mathfrak{U}} h(\alpha.{\bf z},\tau)
\end{eqnarray*} Here the function $h$ is essentially the
elliptic trilogarithm introduced by Beilinson and Levin \cite{BL,Levin} and the set
$\mathfrak{U}$ contains certain vectors - an elliptic generalization of classical root
systems. The basic function $h$ satisfies the modularity property (c.f. example \ref{basicexample})
\[
h\left( \frac{z}{\tau} \right) = \frac{1}{\tau^2} h(z,\tau) - \frac{z^4}{4!\,\tau^3}
\]
up to quadratic terms. The proof that this satisfies the inversion symmetry may be found in \cite{iabs2}.

\end{example}

\section{Inversion Symmetry and Principal Hierarchies}\label{mainresult}

A deformed flat coordinate $\mathfrak{t}$ must satisfy the equation ${}^\eta \widetilde{\nabla} d\mathfrak{t}=0$ and similarly
a deformed flat coordinate $\mathfrak{p}\,,$ also known as twisted period, must satisfy the equation ${}^g \widetilde{\nabla} d\mathfrak{p}=0\,,$ where
${}^g {\widetilde\nabla}_X Y = {}^g \nabla_X Y - \nu X \star Y$ is the dual deformed connection. It is straightforward to prove,
using canonical coordinates, that

\[
\hat{\mathfrak{t}}=\frac{\mathfrak{t}}{t_1} \qquad {\rm and~}\qquad {\hat{\mathfrak{p}}}=\frac{\mathfrak{p}}{t_1}
\]
are corresponding dual deformed flat coordinates (and one follows from the other by the contour integral/Laplace transform
methods in \cite{dubrovin2}). However, a more subtle form appears when the calculations are performed in flat coordinates -
a shift appears in the labels. We begin by examining the relationship between $\mathfrak{t}$ and $\hat{\mathfrak{t}}$ since the
coefficients will give the transformation between the Hamiltonian densities of the various flows. These results has also been proved independently in \cite{zhang}.

\begin{prop}\label{propzero} Under the inversion symmetry, the Hamiltonian densities of the principal hierarchy corresponding to $F$ are mapped to those of $\hat{F}$ according to the following rules
\begin{equation}\label{htrans1}
h^{(n,\alpha)} (\mathbf{t}(\hat{\mathbf{t}})) = -\frac{1}{\hat{t}^N}\hat{h}^{(\tilde{n},\tilde{\alpha})}(\hat{\mathbf{t}}),
\end{equation}
where
\begin{equation*}
\tilde{n} = \left\{ \begin{array}{ll} n+1, & \mbox{if } \alpha = N, \\ n, & \mbox{if }\alpha \neq 1, N, \\ n-1, &\mbox{if }\alpha = 1, \end{array}\right.
\tilde{\alpha} = \left\{ \begin{array}{ll} 1, & \mbox{if } \alpha = N, \\ \alpha, & \mbox{if }\alpha \neq 1, N, \\ N, & \mbox{if }\alpha = 1. \end{array}\right.
\end{equation*}
\end{prop}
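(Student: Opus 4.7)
The strategy is to encode the Hamiltonian densities into the generating functions of deformed flat coordinates,
\[
\mathfrak{t}^\alpha(\mathbf{t}, z) = \sum_{n\geq 0} z^n h^{(n,\alpha)}(\mathbf{t}), \qquad \hat{\mathfrak{t}}^\alpha(\hat{\mathbf{t}}, z) = \sum_{n\geq 0} z^n \hat{h}^{(n,\alpha)}(\hat{\mathbf{t}}),
\]
which are the unique solutions of \eqref{gm1} for the respective Frobenius structures with initial data $\mathfrak{t}^\alpha|_{z=0}=t_\alpha$ and $\hat{\mathfrak{t}}^\alpha|_{z=0}=\hat t_\alpha$. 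The proposition will be extracted by identifying a precise transformation law relating $\{\mathfrak{t}^\alpha\}$ and $\{\hat{\mathfrak{t}}^\alpha\}$ under the action of $I$; the three cases $\alpha=1$, $\alpha\neq 1,N$, and $\alpha=N$ will correspond to different compensating powers of $z$ in this law, which is the source of the asymmetric shifts $\tilde n = n\pm 1$ in the statement.

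The key technical step is an \emph{inversion lemma} for the deformed flat connection: if $\varphi(\mathbf{t},z)$ satisfies ${}^\eta\widetilde\nabla\,d\varphi = 0$, then $\varphi(\mathbf{t}(\hat{\mathbf{t}}),z)/t_1(\hat{\mathbf{t}})$ satisfies the analogous equation ${}^{\hat\eta}\widetilde\nabla\,d(\cdot)=0$ for $\hat F$ with the same deformation parameter $z$. I would prove this in canonical coordinates, exploiting the transformation of Darboux--Egorov data $\hat H_i = H_i/t_1$ and $\hat\beta_{ij} = \beta_{ij} - H_iH_j/t_1$ derived in the proof of Lemma \ref{basiclemma}, which reduces the claim to a direct verification on the rotation coefficients; alternatively one can work in flat coordinates via the structure-constant transformation \eqref{ctrans1} together with $c_{1\alpha\beta}=\eta_{\alpha\beta}$.

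With the inversion lemma in place, matching coefficients of $z^n$ in the three cases completes the argument. For $\alpha \neq 1, N$, the leading values align at $z=0$ since $(\mathfrak{t}^\alpha/t_1)|_{z=0} = t_\alpha/t_1 = \hat t_\alpha$, so uniqueness forces $\hat{\mathfrak{t}}^\alpha = \mathfrak{t}^\alpha/t_1$, and together with $-1/\hat t^N = t_1$ this gives the middle case. For $\alpha = N$ the correct identification is $\hat{\mathfrak{t}}^1 = (z\mathfrak{t}^N - 1)/t_1$: the numerator is still a deformed flat coordinate for $F$, being a $\mathbb{C}[z]$-combination of $\mathfrak{t}^N$ and the trivial solution $1$, and its $z^0$ value $-1/t_1 = \hat t_1$ matches $\hat{\mathfrak{t}}^1|_{z=0}$; the extra factor of $z$ produces precisely the shift $\tilde n = n+1$. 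The mirror identification for $\alpha = 1$ is $\hat{\mathfrak{t}}^N = (\mathfrak{t}^1 - t_1)/(z\,t_1)$, which is regular at $z=0$ because $\mathfrak{t}^1 - t_1 = O(z)$, and the factor $1/z$ yields the shift $\tilde n = n - 1$.

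The main obstacle is this last boundary case. The formal series $(\mathfrak{t}^1 - t_1)/z$ is \emph{not} a deformed flat coordinate for $F$ in the strict sense, because the Casimir $t_1$ contributes an additional term $c^\sigma_{\kappa\varepsilon}\partial_\sigma t_1 = \eta_{\kappa\varepsilon}$ under the deformed connection, and so the inversion lemma cannot be invoked verbatim. The resolution is to first normalize $h^{(1,1)} = \tfrac12 t_\sigma t^\sigma$ via quasi-homogeneity under the Euler field, so that its image $h^{(1,1)}/t_1 = \tfrac12 t_\sigma t^\sigma/t_1 = \hat t^1 = \hat t_N$ coincides exactly with $\hat h^{(0,N)}$; one then proceeds by induction on $n$ using the recursions \eqref{recur1} on both sides, showing that the anomalous $\eta_{\kappa\varepsilon}$ contribution is precisely compensated by the quadratic term $\tfrac12 t_\sigma t^\sigma$ in the coordinate change $\hat t^1 = \tfrac12 t_\sigma t^\sigma /t_1$.
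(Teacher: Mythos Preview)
Your proposal is correct and follows essentially the same route as the paper. The paper states the inversion lemma $\hat{\mathfrak{t}}=\mathfrak{t}/t_1$ (proved in canonical coordinates) in the paragraph immediately preceding the proposition, and its proof then amounts to ``invert \eqref{gm1} via \eqref{ctrans1} and check that the claimed $\hat h^{(\tilde n,\tilde\alpha)}$ satisfy it'', with the label identification carried out by repeatedly applying $e=\partial/\partial t^1$ to strip the densities down to their seed Casimir $t_{\tilde\alpha}$.

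The only genuine difference is organisational: you package the verification through the generating functions and read off the shifts $\tilde n=n\pm 1$ as explicit powers of $z$ in the identifications $\hat{\mathfrak t}^{\,1}=(z\,\mathfrak t^N-1)/t_1$ and $\hat{\mathfrak t}^{\,N}=(\mathfrak t^1-t_1)/(z\,t_1)$, whereas the paper simply checks the recursion termwise and recovers the label by iterating $e$. Your treatment makes the mechanism for the shifts more transparent, and your explicit handling of the anomalous $\alpha=1$ case (where the subtraction of $t_1$ spoils deformed flatness by the extra $\eta_{\kappa\varepsilon}$ term, forcing a separate induction anchored at $h^{(1,1)}=\tfrac12 t_\sigma t^\sigma$) is a detail the paper suppresses. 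Both arguments are equivalent in content.
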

\begin{proof} This consists of inverting equation \eqref{gm1} using \eqref{ctrans1} and checking the solutions given in the proposition \eqref{htrans1} do in fact satisfy it. One may identify how the labels $\alpha$ are mapped using the fact that the unity field decreases the degree of the densities by one,
\begin{equation*}
e^{\tilde{n}}(h^{(\tilde{n},\tilde{\alpha})}) = t_{\tilde{\alpha}}.
\end{equation*}
Here $e^{\tilde{n}}$ denotes $\tilde{n}$ successive applications of the operator $e= \partial / \partial t^1$.
\end{proof}
This lifts in turn to the corresponding flows, leading to the following
\begin{prop}\label{prop1} Under the inversion symmetry, the flows on the loop space of the inverted Frobenius manifold are related to those of the original one via
\begin{equation}\label{flowtrans1}
M_{(n,\alpha)}(\mathbf{t}(\hat{\mathbf{t}}))= -\hat{t}^N\hat{M}_{(\tilde{n},\tilde{\alpha})}(\hat{\mathbf{t}}) + \hat{h}^{(\tilde{n}-1,\tilde{\alpha})}(\hat{\mathbf{t}})\mathbf{1},
\end{equation}
where $\tilde{n}$, $\tilde{\alpha}$ are as above.
\end{prop}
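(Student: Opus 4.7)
The plan is to substitute the Hamiltonian density transformation from Proposition \ref{propzero} directly into the defining formula $M^{\alpha}_{(n,\kappa)\lambda}=\eta^{\alpha\beta}c^{\sigma}_{\beta\lambda}\partial_{\sigma}h^{(n-1,\kappa)}$ and then collect terms. Since $-1/\hat{t}^{N}=t^{N}$, and the index shift in Proposition \ref{propzero} is compatible in the sense $\widetilde{n-1}=\tilde{n}-1$, the density transformation at level $n-1$ reads $h^{(n-1,\kappa)}(\mathbf{t})=t^{N}\,\hat{h}^{(\tilde{n}-1,\tilde{\kappa})}(\hat{\mathbf{t}}(\mathbf{t}))$.

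Applying the Leibniz and chain rules,
\[
\frac{\partial h^{(n-1,\kappa)}}{\partial t^{\sigma}}=\delta^{N}_{\sigma}\,\hat{h}^{(\tilde{n}-1,\tilde{\kappa})}+t^{N}\,\frac{\partial\hat{t}^{\rho}}{\partial t^{\sigma}}\frac{\partial\hat{h}^{(\tilde{n}-1,\tilde{\kappa})}}{\partial\hat{t}^{\rho}}\,,
\]
so $M^{\alpha}_{(n,\kappa)\lambda}$ splits into two pieces. The first piece $\eta^{\alpha\beta}c^{N}_{\beta\lambda}\hat{h}^{(\tilde{n}-1,\tilde{\kappa})}$ simplifies via the unit axiom $c_{1\beta\lambda}=\eta_{\beta\lambda}$ together with the antidiagonal form of $\eta$ encoded in $t_{1}=t^{N}$, which give $\eta^{\alpha\beta}c^{N}_{\beta\lambda}=\delta^{\alpha}_{\lambda}$; this reproduces the correction term $\hat{h}^{(\tilde{n}-1,\tilde{\kappa})}\mathbf{1}$ of \eqref{flowtrans1}.

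The second piece $t^{N}\eta^{\alpha\beta}c^{\sigma}_{\beta\lambda}(\partial\hat{t}^{\rho}/\partial t^{\sigma})\partial_{\hat{t}^{\rho}}\hat{h}^{(\tilde{n}-1,\tilde{\kappa})}$ must be shown to equal $-\hat{t}^{N}\hat{M}^{\alpha}_{(\tilde{n},\tilde{\kappa})\lambda}(\hat{\mathbf{t}})$, the indices on the two sides being related by pushforward of $M$ to $\hat{\mathbf{t}}$-coordinates. Inverting \eqref{ctrans1} expresses $c_{\gamma\beta\lambda}$ as $(t^{N})^{2}$ times Jacobians $\partial t/\partial\hat{t}$ acting on $\hat{c}_{abc}$; combined with the explicit $t^{N}$ prefactor, this reconstructs the required weight $-\hat{t}^{N}=1/t^{N}$. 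The three Jacobians from \eqref{ctrans1} then pair with the single Jacobian from the chain rule and with the pushforward Jacobians of $\alpha,\lambda$ to produce $\hat{\eta}^{\alpha\beta}\hat{c}^{\rho}_{\beta\lambda}\partial_{\hat{t}^{\rho}}\hat{h}^{(\tilde{n}-1,\tilde{\kappa})}=\hat{M}^{\alpha}_{(\tilde{n},\tilde{\kappa})\lambda}$, as required.

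The main technical obstacle is in this last step: tracking the four Jacobian factors and verifying that inversion acts conformally on the inverse metric, in the sense that the pushforward satisfies $\tilde{\eta}^{ab}=(\hat{t}^{N})^{2}\eta^{ab}$. This conformal identity underlies the clean emergence of the factor $-\hat{t}^{N}$, and is most transparent in canonical coordinates (as in Lemma \ref{basiclemma}), where one has $\hat{H}_{i}=H_{i}/t_{1}$, so the inversion amounts to a simple rescaling of the orthonormal frame and \eqref{flowtrans1} reduces to a scalar identity that is immediate from the recursion \eqref{recur1}.
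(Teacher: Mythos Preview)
Your proposal is correct and follows essentially the same route as the paper: substitute the density relation from Proposition~\ref{propzero} into the defining formula \eqref{matrixform1}, split via the product/chain rule into an identity piece and a piece carrying $\hat c$, and then use \eqref{ctrans1} together with the conformal relation between $\eta$ and $\hat\eta$ to recognise the latter as $-\hat t^N\hat M$. The only organisational difference is that the paper first passes to $\hat t$-coordinates in the flow equation (so the pushforward Jacobians are present from the outset and the two pieces appear as ``term~1'' and ``term~2''), whereas you compute $M$ in $t$-coordinates and then conjugate; these are the same calculation reordered, and your identification $\eta^{\alpha\beta}c^{N}_{\beta\lambda}=\delta^{\alpha}_{\lambda}$ for the identity piece is exactly what underlies the paper's simplification of term~2.
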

\begin{proof} This is a calculation that follows from plugging the results of Proposition \ref{propzero} into \eqref{matrixform1}. Particularly, using equations \eqref{isdef}, \eqref{ctrans1}, and \eqref{htrans1} we can apply $I$ to \eqref{matrixform1} to find
\setlength \multlinegap{0pt}
\begin{multline} \tag{2}
\frac{\partial\hht^{\rho}}{\partial T_{(n,\alpha)}} = -\underbrace{(\hht^N)^{-3} \frac{\p \hht^{\rho}}{\p t^{\iota}} \hat{\eta}^{\iota\kappa} \hat{\eta}^{\mu\lambda} \frac{\p \hht^{\delta}}{\p t^{\lambda}} \frac{\p \hht^{\gamma}}{\p t^{\kappa}} \hat{c}_{\delta\gamma\xi}(\hht)\frac{\p \hht^{\ve}}{\p t^{\mu}} \frac{\p \hh^{(\tilde{n}-1,\tilde{\alpha})}(\hht)}{\p t^{\mu}}\frac{\partial \hht^{\xi}}{\partial X}}_{\mbox{term }1} \\
+ \underbrace{(\hht^N)^{-2} \frac{\p \hht^{\rho}}{\p t^{\iota}} \hat{\eta}^{\iota\kappa} \frac{\p \hht^{\gamma}}{\p t^{\kappa}} \hat{\eta}_{\gamma\xi}\hh^{(\tilde{n}-1,\tilde{\alpha})}(\hht)\frac{\partial \hht^{\xi}}{\partial X}}_{\mbox{term }2}.
\end{multline}
A straightforward but tedious calculation then gives simplification of the terms
\begin{equation*}
\mbox{term }1 = -\hht^N \hat{\eta}^{\rho\ve}\hat{c}^{\mu}_{\ve\sigma}\frac{\p \hh^{(n-1,\alpha)}(\hht)}{\p \hht^{\mu}}\frac{\partial \hht^{\sigma}}{\partial X} \qquad \mbox{term }2= \delta^{\rho}_{\sigma}\hh^{(n-1,\alpha)}(\hht)\frac{\partial \hht^{\sigma}}{\partial X}.
\end{equation*}
Proposition is proved.
\end{proof}

Naturally, one is led to consider the analogous problem in the almost dual picture. This gives the following

\begin{prop}
Assume $d \neq 1$. Let $F^\star$ and $\hat{F}^\star$ be related by the almost dual inversion symmetry, $I^\star$. Then the successive approximations to the flat coordinates for the corresponding deformed connections $^g\tilde{\nabla}$ and $^{\hat{g}}\tilde{\nabla}$ are related via
\begin{equation}\label{ptrans1}
l^{(n,\alpha)}(\mathbf{p}(\hat{\mathbf{p}})) = \frac{1}{\hat{t}_1}\hat{l}^{(n,\alpha)}(\hat{\mathbf{p}}),
\end{equation}
where the approximations $l^{(n,\alpha)}$ are defined via
\begin{equation}
\mathfrak{p}^{\alpha}(\mathbf{p},\nu) = \sum_{n=0}^{\infty} \nu^n l^{(n,\alpha)}(\mathbf{p})
\end{equation}
This gives rise to the corresponding relationship between the almost dual flows
\begin{equation}\label{adflowtrans1}
M_{(n,\alpha)}(\mathbf{p}(\hat{\mathbf{p}})) = \hat{t}_1\hat{M}_{(n,\alpha)}(\hat{\mathbf{p}}) - \frac{2}{1-d}\hat{l}^{(n-1,\alpha)}(\hat{\mathbf{p}})\mathbf{1}.
\end{equation}
If $d=1$ the transformation is the same as in \eqref{htrans1}.
\end{prop}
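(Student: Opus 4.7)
The plan is to derive the relation \eqref{ptrans1} between twisted flat coordinates directly from the period identity $\hat{\mathfrak{p}} = \mathfrak{p}/t_1$ of Lemma \ref{basiclemma}, and then substitute into the dual analogue of \eqref{matrixform1}, exploiting the transformation of the almost-dual structure constants already extracted in the proof of the preceding theorem.

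Concretely, I would write $\hat{\mathfrak{p}}^{\alpha}(\hat{\mathbf{p}},\nu) = \mathfrak{p}^{\alpha}(\mathbf{p}(\hat{\mathbf{p}}),\nu)/t_1$ and expand both sides as formal power series in $\nu$; the parameter $\nu$ is invariant under $I^\star$, since the dual deformed connection is built intrinsically from $g$, $\star$, and $E$. Matching coefficients at each order yields $\hat{l}^{(n,\alpha)}(\hat{\mathbf{p}}) = l^{(n,\alpha)}(\mathbf{p}(\hat{\mathbf{p}}))/t_1$. Combining $\hat{p}^i = p^i/t_1$ and $\hat{g} = g$ from Proposition \ref{dneqone} with the identity $t_1 = \tfrac{1-d}{2}g_{ab}p^a p^b$ gives $\hat{t}_1 = 1/t_1$, so the stated form of \eqref{ptrans1} is immediate.

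The flow identity then follows by substituting the dual version of \eqref{matrixform1},
\[
M_{(n,\alpha)}(\mathbf{p})^{\rho}_{\phantom{\rho}\lambda} = g^{\rho\beta}\, c^{\star\sigma}_{\beta\lambda}(\mathbf{p})\, \frac{\partial l^{(n-1,\alpha)}}{\partial p^\sigma},
\]
into the $\hat{\mathbf{p}}$-coordinates, using the explicit transformation
\[
\hat{c}^{\star}_{\alpha\beta\gamma}(\hat{\mathbf{p}}) = t_1\, c^{\star}_{\alpha\beta\gamma}(\mathbf{p}) - \tfrac{2}{1-d}\bigl(g_{\alpha\beta}p_\gamma + g_{\alpha\gamma}p_\beta + g_{\beta\gamma}p_\alpha\bigr) + \tfrac{2}{t_1(1-d)}\,p_\alpha p_\beta p_\gamma
\]
recorded in the proof of the previous theorem, together with the Jacobian $\partial p^\mu/\partial \hat{p}^\nu$ and the $l\leftrightarrow \hat{l}$ relation just established. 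A direct if tedious computation, parallel to the proof of Proposition \ref{prop1}, should produce the leading piece $\hat{t}_1\hat{M}_{(n,\alpha)}(\hat{\mathbf{p}})$ from the first summand of $\hat{c}^\star$ and a rank-one correction $-\tfrac{2}{1-d}\hat{l}^{(n-1,\alpha)}(\hat{\mathbf{p}})\mathbf{1}$ from the remaining two.

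The main obstacle lies precisely in this last collapse: one must verify that the non-tensorial summands of $\hat{c}^\star$, contracted against $\partial\hat{l}^{(n-1,\alpha)}/\partial \hat{p}$ and the Jacobians, assemble into a scalar multiple of $\mathbf{1}$ rather than an arbitrary matrix. I expect this to follow from the quasi-homogeneity of $\hat{l}^{(n-1,\alpha)}$ under the radial vector field $\hat{p}^\alpha \partial_{\hat{p}^\alpha}$, inherited from the Euler action on periods; this plays here the role that the label shift $(n,\alpha)\mapsto(\tilde{n},\tilde{\alpha})$ played in Proposition \ref{prop1}. The $d=1$ statement requires no new work: Case II of the preceding theorem shows that $I^\star$ in that regime has exactly the same algebraic form as the ordinary inversion of $F$ (with $p^N = t_1$ in the role of $t^N$), so the proof of Proposition \ref{propzero} carries over verbatim, yielding \eqref{htrans1} with $l$ and $\hat{l}$ in place of $h$ and $\hat{h}$.
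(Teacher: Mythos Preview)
Your proposal is correct and matches the paper's approach: both argue in parallel to Propositions~\ref{propzero} and~\ref{prop1}, using the transformation of $\hat{c}^\star_{\alpha\beta\gamma}$ you quote together with a homogeneity condition to effect the collapse to a multiple of $\mathbf{1}$. The paper pins down the specific identity you gesture at as $p^{\sigma}\,\partial^3 F^\star/\partial p^{\sigma}\partial p^{\varepsilon}\partial p^{\kappa} = \tfrac{2}{1-d}\,g_{\varepsilon\kappa}$; with this in hand the remainder is exactly the direct computation you describe.
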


\begin{proof}
This is carried out in an analogous manner to the above using
\begin{equation*}
\frac{\partial^3 {\hat{F^\star}}}{\partial{\hat{p}}^\alpha \partial{\hat{p}}^\beta \partial{\hat{p}}^\gamma} =
t_1 \frac{\partial^3 {{F^\star}}}{\partial{{p}}^\alpha \partial{{p}}^\beta \partial{{p}}^\gamma}
-\frac{2}{1-d} (g_{\alpha\beta}p_\gamma+g_{\alpha\gamma}p_{\beta}+g_{\beta\gamma}p_{\alpha}) + \frac{2}{t_1(1-d)} p_\alpha p_\beta p_\gamma\,.
\end{equation*}
One must also make use of certain homogeneity conditions, for example
\begin{equation*}
p^{\sigma}\frac{\partial^3F^\star}{\partial p^{\sigma} \partial p^{\varepsilon} \partial p^{\kappa}} = \frac{2g_{\kappa\varepsilon}}{1-d}.
\end{equation*}
\end{proof}

At first sight the result is surprising: the flows associated to $F$ are not mapped under $I$ to the flows associated to $\hat{F}\,.$ However
a simple reciprocal transformation
\begin{eqnarray*}
d{\tilde{X}} & = & t^N dX + h^{({\tilde{n}}-1,{\tilde{\kappa}})} dT_{({\tilde{n}},{\tilde{\kappa}})}\,,\\
d{\tilde{T}_{({\tilde{n}},{\tilde{\kappa}})}} & = & dT_{({\tilde{n}},{\tilde{\kappa}})}
\end{eqnarray*}
(recall that the Egorov potential $t^N$ is conserved by (\ref{PT})) transform the flow to that associated with ${\hat{F}}\,.$ Thus:

\begin{theorem}
Up to a simple reciprocal transformation, $I$ maps flows of $F$ to flows of $\hat{F}\,.$ Moreover, if the Frobenius manifold
is modular, $I$ maps the flows to themselves, i.e. the flows are invariant.
\end{theorem}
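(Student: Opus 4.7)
The plan is to use Proposition \ref{prop1} to identify the obstruction to $I$ sending flows to flows, and then show that it is precisely killed by the stated reciprocal transformation. Begin with a flow of $F$ at time $T_{(\tilde{n},\tilde{\kappa})}$; substituting $t=t(\hat{t})$ and applying Proposition \ref{prop1} (with the tilded and untilded indices of that proposition interchanged by involution) yields
$$
\frac{\partial \hat{t}^\rho}{\partial T_{(\tilde{n},\tilde{\kappa})}} = -\hat{t}^N\,\hat{M}_{(n,\kappa)}{}^\rho{}_\sigma \frac{\partial \hat{t}^\sigma}{\partial X} + \hat{h}^{(n-1,\kappa)} \frac{\partial \hat{t}^\rho}{\partial X},
$$
where $(n,\kappa)$ denotes the involutive image of $(\tilde{n},\tilde{\kappa})$. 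This is not yet a flow of $\hat{F}$ because of the scalar prefactor $-\hat{t}^N$ multiplying $\hat{M}$ and the diagonal correction $\hat{h}^{(n-1,\kappa)}\mathbf{1}$; the goal is to absorb both of these into a change of independent variables on the loop space.

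Next, verify that the $1$-form $\omega = t^N\,dX + h^{(\tilde{n}-1,\tilde{\kappa})}\,dT_{(\tilde{n},\tilde{\kappa})}$ is closed. This is exactly the content of \eqref{PT} applied to the indices $(\tilde{n},\tilde{\kappa})$, which encodes the conservation of the Egorov potential $t^N$ along the flow. The reciprocal $d\tilde{X}=\omega$, $\tilde{T}=T_{(\tilde{n},\tilde{\kappa})}$ is therefore well-defined, with chain rule
$$
\frac{\partial}{\partial X} = t^N \frac{\partial}{\partial \tilde{X}}, \qquad \frac{\partial}{\partial T} = \frac{\partial}{\partial \tilde{T}} + h^{(\tilde{n}-1,\tilde{\kappa})}\frac{\partial}{\partial \tilde{X}}.
$$
Substituting these into the displayed flow and using $t^N\hat{t}^N=-1$, the scalar prefactor on $\hat{M}$ collapses to $+1$, while the residual diagonal contribution is proportional to $\hat{h}^{(n-1,\kappa)}\,t^N - h^{(\tilde{n}-1,\tilde{\kappa})}$. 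Proposition \ref{propzero} applied at the indices $(n-1,\kappa)$, together with the small combinatorial identity $\widetilde{n-1}=\tilde{n}-1$ (a three-case check depending on whether $\kappa$ equals $1$, lies strictly between $1$ and $N$, or equals $N$), shows that this residual vanishes identically. The result is precisely
$$
\frac{\partial \hat{t}^\rho}{\partial \tilde{T}} = \hat{M}_{(n,\kappa)}{}^\rho{}_\sigma\,\frac{\partial \hat{t}^\sigma}{\partial \tilde{X}},
$$
a flow of $\hat{F}$.

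The modular statement then follows immediately from \eqref{invariantF}: for a modular Frobenius manifold, $\hat{F}$ coincides with $F$ under the inversion coordinate change $t\mapsto\hat{t}$, so flows of $\hat{F}$ are flows of $F$ written in the inverted coordinates, and the previous paragraph promotes $I$ to a self-map of the hierarchy up to the reciprocal transformation. The main obstacle in the argument is the bookkeeping of indices — in particular, recognising that the precise density appearing in the conservation law \eqref{PT} for $t^N$ is exactly the one rendered proportional, via Proposition \ref{propzero} and the identity $\widetilde{n-1}=\tilde{n}-1$, to the diagonal correction produced by $I$; everything else is a routine application of the chain rule together with the basic identities $t^N\hat{t}^N=-1$ and the transformation rule for $\hat{h}$.
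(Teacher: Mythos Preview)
Your proof is correct and follows exactly the route the paper sketches but does not spell out: the paper simply writes down the reciprocal transformation, remarks that $t^N$ is conserved by \eqref{PT}, and states the theorem without further argument. Your computation --- applying Proposition~\ref{prop1}, changing variables via the chain rule, and using Proposition~\ref{propzero} together with the three-case identity that the tilde of $(\tilde n-1,\tilde\kappa)$ is $(n-1,\kappa)$ to kill the diagonal term --- is precisely the verification the paper leaves to the reader.
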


For semi-simple Frobenius manifolds these flow take diagonal form when written in terms of canonical coordinates (which are
specific examples of Riemann invariants):
\[
\frac{\partial u^i}{\partial T_{(n,k)}} = \lambda^i_{(n,k)}({\bf u}) \frac{\partial u^i}{\partial X}\,.
\]
Since the characteristic velocities are the eigenvalues of the tensor $M_{(n,k)}$ one can easily show that
under inversion
\[
{\hat \lambda}^i_{(n,k)} = t^N \lambda^i_{(n,k)} - h_{(n,k)}\,
\]
and thus by applying a reciprocal transformation the system, when written in Riemann invariant form,
is unchanged. This same result holds in the weaker setting of semi-Hamiltonian systems where the metric is Egorov \cite{PT}.
A class of such examples may be found by restricting flows to certain natural submanifolds of a Frobenius manifold \cite{iabs1,FV}.

We end this section with an extended example which shows how the modularity properties of solutions to the Chazy equation
results in, up to a reciprocal transformation, flows that are invariant under the modular transformations.

\begin{example}
Consider the Frobenius manifold defined by the free energy
\begin{equation}
F = \frac{1}{2}t_1^2t_3+\frac{1}{2}t_1t_2^2 - \frac{t_2^4}{16}\gamma(t_3)\,; \quad E=t_1\frac{\partial}{\partial t_1} + \frac{1}{2}t_2\frac{\partial}{\partial t_2},
\end{equation}
where $\gamma$ is some unknown 1-periodic function. For the duration of this example all coordinates will be written with lowered indices ($t_{\alpha} = t^{\alpha}$). In order for $F$ to satisfy WDVV, $\gamma$ must satisfy Chazy's equation,
\begin{equation*}
\gamma'''(t_3)=6\gamma(t_3)\gamma''(t_3)-9(\gamma'(t_3))^2.
\end{equation*}
The main property of the Chazy equation is an $SL(2,\ZZ)$ invariance:
\begin{eqnarray*}
t_3 & \mapsto & \frac{at_3 + b}{ct_3 + d}, \quad ad-bc =1, \nonumber \\
\gamma(t_3) & \mapsto & (ct_3 + d)^2\gamma(t_3)+2c(ct_3 + d).
\end{eqnarray*}
This in turn allows us to apply the inversion symmetry. Since $d=1$ and $r^N=0$ this example is one of a class of examples of Frobenius manifolds that lie at fixed points of the inversion symmetry, and hence defines a modular Frobenius manifold.
Consider the following solution to equation \ref{gm1},
\begin{equation}
h_{ (0,2) } = t_2.
\end{equation}
Applying the inversion symmetry, we find
\begin{equation}
h_{ (0,2) }(\hat{t}_1+\frac{1}{2}\frac{\hat{t}_2^2}{\hat{t}_3}, -\frac{\hat{t}_2}{\hat{t}_3}, -\frac{1}{\hat{t}_3}) = -\frac{\hat{t}_2}{\hat{t_3}} =  -\frac{1}{\hat{t_3}}h_{(0,2)}(\hat{t}_1,\hat{t}_2,\hat{t}_3),
\end{equation}
since $F$ lies at a fixed point of the inversion symmetry. Similarly, considering
\begin{equation}
h_{ (1,1) } = t_1t_3+\frac{1}{2}t_2^2
\end{equation}
one finds
\begin{equation}\begin{split}
h_{ (1,1) } (\hat{t}_1+\frac{1}{2}\frac{\hat{t}_2^2}{\hat{t}_3}, -\frac{\hat{t}_2}{\hat{t}_3}, -\frac{1}{\hat{t}_3}) = \left( \hat{t}_1+\frac{1}{2}\frac{\hat{t}_2^2}{\hat{t}_3} \right)\left( -\frac{1}{\hat{t}_3} \right)+\frac{1}{2}\left( -\frac{\hat{t}_2}{\hat{t}_3} \right) = -\frac{1}{\hat{t}_3}\hat{t}_1 = -\frac{1}{\hat{t}_3}\hat{h}_{(0,3)}.
\end{split} \end{equation}
%
The following depicts how the densities are mapped under $I$:

\setlength{\unitlength}{0.65mm}
\begin{center}
\begin{picture}(100,80)\put(-15,50){\small{$\alpha = 3$}}
\put(0,50){\textbullet}
\put(0,60){\small{$\partial_{\alpha}\partial_{\beta}h_{(n,\kappa)}=c^{\sigma}_{\alpha\beta}\partial_{\sigma}h_{(n-1,\kappa)}$}}
\put(79,60){\vector(1,0){10}}
\put(3,51){\vector(1,0){45}}
\put(3,50){\vector(1,-1){47}}
\put(50,3){\vector(-1,1){47}}
\put(50,50){\textbullet}
\put(53,51){\vector(1,0){45}}
\put(53,50){\vector(1,-1){47}}
\put(100,3){\vector(-1,1){47}}
\put(100,50){\textbullet}
\put(23,34){\small{$I$}}
\put(73,34){\small{$I$}}
\put(-15,25){\small{$\alpha = 2$}}
\put(0,25){\textbullet}
\put(-1,20){\Large{\textbf{$\circlearrowleft$}}}
\put(5,18){\small{$I$}}
\put(3,26){\vector(1,0){45}}
\put(50,25){\textbullet}
\put(49,20){\Large{\textbf{$\circlearrowleft$}}}
\put(55,18){\small{$I$}}
\put(53,26){\vector(1,0){45}}
\put(100,25){\textbullet}
\put(99,20){\Large{\textbf{$\circlearrowleft$}}}
\put(105,18){\small{$I$}}
\put(-15,0){\small{$\alpha = 1$}}
\put(0,0){\textbullet}
\put(-4,-6){\small{$n-1$}}
\put(3,1){\vector(1,0){45}}
\put(50,0){\textbullet}
\put(50,-6){\small{$n$}}
\put(53,1){\vector(1,0){45}}
\put(100,0){\textbullet}
\put(96,-6){\small{$n+1$}}
\end{picture}
\end{center}
\vspace{0.5cm}

The flow corresponding the the density $h_{(0,2)}$ is
\begin{equation*}
\frac{\partial}{\partial T_{(1,2)}}\left( \begin{array}{c} t_1 \\ t_2 \\ t_3 \end{array}\right) =
\left( \begin{array}{ccc} 0 & \frac{-3}{4}t_2^4\gamma'(t_3) & -\frac{3}{4}t_2^3\gamma''(t_3) \\ 1 & -\frac{3}{2}\gamma(t_3)t_2 & -\frac{3}{4}t_2^2\gamma'(t_3) \\ 0 & 1 & 0 \end{array}\right)
\frac{\partial}{\partial X}\left( \begin{array}{c} t_1 \\ t_2 \\ t_3 \end{array} \right),
\end{equation*}
Inverting this one finds
\begin{eqnarray}
\frac{\partial}{\partial T_{(1,2)}}\left( \begin{array}{c} \hht_1 \\ \hht_2 \\ \hht_3 \end{array} \right) &=& \left( \begin{array}{ccc} \hht_2 & \frac{3}{4}\hht_2^2\hht_3\gamma'(\hht_3) & \frac{1}{4}\hht_2^3\hht_3\gamma''(\hht_3) \\ -\hht_3 & \hht_2 + \frac{3}{2}\gamma(\hht_3)\hht_2\hht_3 & \frac{3}{4}\hht_2^2\hht_3\gamma'(\hht_3) \\ 0 & -\hht_3 & \hht_2 \end{array} \right)\frac{\partial}{\partial X}\left( \begin{array}{c} \hht_1 \\ \hht_2 \\ \hht_3 \end{array} \right) \nonumber \\
&=&\left( \left( \begin{array}{ccc} 0 & \frac{-3}{4}\hht_2^4\gamma'(\hht_3) & -\frac{3}{4}\hht_2^3\gamma''(\hht_3) \\ 1 & -\frac{3}{2}\gamma(\hht_3)\hht_2 & -\frac{3}{4}\hht_2^2\gamma'(\hht_3) \\ 0 & 1 & 0 \end{array}\right) + \hht_2\mathbf{1} \right)
\frac{\partial}{\partial X}\left( \begin{array}{c} \hht_1 \\ \hht_2 \\ \hht_3 \end{array} \right) \nonumber \\
&=& \left( -\hht_3\hat{M}_{1,2}(\hat{\mathbf{t}}) + h_{(0,2)}(\hat{\mathbf{t}})\mathbf{1}\right) \frac{\partial}{\partial X}\left( \begin{array}{c} \hht_1 \\ \hht_2 \\ \hht_3 \end{array} \right), \nonumber
\end{eqnarray}
as predicted by \eqref{flowtrans1}.

Similarly, one can see how the flows corresponding to $\alpha=1$ and $\alpha=3$ are related. The flow corresponding to $h_{(1,1)}$ is
\begin{equation*}
\frac{\partial}{\partial T_{(2,1)}} \left( \begin{array}{c} t_1 \\ t_2 \\ t_3 \end{array} \right) = \left( \begin{array}{ccc} t_1 & -\frac{3}{4}t_2^3\gamma'(t_3) - \frac{1}{4}t_2^3 t_3 \gamma''(t_3) & -\frac{1}{4}t_2^4\gamma''(t_3)-\frac{1}{16}t_2^4t_3\gamma'''(t_3) \\ t_2 & t_1 - \frac{3}{2}\gamma(t_3)t_2^2-\frac{3}{4}t_2^2t_3\gamma(t_3) & -\frac{3}{4}t_2^3\gamma'(t_3) - \frac{1}{4}t_2^3t_3\gamma''(t_3) \\ t_3 & t_2 & t_1 \end{array} \right)\frac{\partial}{\partial X}\left( \begin{array}{c} t_1 \\ t_2 \\ t_3 \end{array} \right).
\end{equation*}
This may be inverted to give
\begin{eqnarray}
\frac{\partial}{\partial T_{(2,1)}}\left( \begin{array}{c} \hat{t}_1 \\ \hat{t}_2 \\ \hat{t}_3 \end{array} \right) &=& \left( \begin{array}{ccc} \hat{t}_1 & \frac{1}{4}\hat{t}_2^3\hat{t}_3\gamma''(\hat{t}_3) & \frac{1}{16}\hat{t}_2^4\hat{t}_3\gamma'''(\hat{t}_3) \\ 0 & \hat{t}_1 + \frac{3}{4}\hat{t}_2^2\hat{t}_3\gamma'(\hat{t}_3) & \frac{1}{4}\hat{t}_2^3\hat{t}_3\gamma''(\hat{t}_3) \\ -\hat{t}_3 & 0 & \hat{t}_1 \end{array} \right) \frac{\partial}{\partial X} \left( \begin{array}{c} \hat{t}_1 \\ \hat{t}_2 \\ \hat{t}_3 \end{array} \right) \nonumber \\
&=&\left( -\hat{t}_3 \left( \begin{array}{ccc} 0 & -\frac{1}{4}\hat{t}_2^3\hat{t}_2\gamma''(\hat{t}_3) & -\frac{1}{16}\hat{t}_2^4\gamma'''(\hat{t}_3) \\ 0 &  -\frac{3}{4}\hat{t}_2^2 \gamma'(\hat{t}_3) & -\frac{1}{4}\hat{t}_2^3\gamma''(\hat{t}_3) \\ 1 & 0 & 0 \end{array} \right) + \hat{t}_1\mathbf{1} \right) \frac{\partial}{\partial X} \left( \begin{array}{c}  \hat{t}_1 \\ \hat{t}_2 \\ \hat{t}_3 \end{array} \right) \nonumber \\
&=& (-\hat{t}_3\hat{M}_{(1,3)}(\hat{t})+ \hat{h}_{(0,3)}(\hat{t})\mathbf{1})\frac{\partial}{\partial X}\left( \begin{array}{c} \hht_1 \\ \hht_2 \\ \hht_3 \end{array} \right).\nonumber
\end{eqnarray}

\end{example}

Note the use, in the above example, of the modular transformation properties of solutions of Chazy's equation and their derivatives.

\section{Comments}

For Frobenius manifolds with $d=1$ the Euler field - the unity field for the dual multiplication - is covariantly constant
with respect to the Levi-Civita connection of the intersection form $g\,.$ Thus in this case one is even closer to the axioms
of a Frobenius manifolds than are the almost duality axioms. The dual prepotential must take the form
\[
{\hat{F}}({\bf p})={\rm cubic~} + f(p^2\,,\ldots\,,p^N)
\]
where the precise form of the cubic terms depends on the original Frobenius manifold. If $\eta_{11}=0$ then
$g(E,E)=r_N$ and hence depends on whether $r_N$ is zero or not, by Lemma 1.1 in \cite{dubrovin1}. This fact has already been observed in the
explicit calculation of dual prepotentials of various Hurwitz space Frobenius manifolds \cite{RS1,RS2}, and had been built into
trigonometric ansatz for solutions of the WDVV equations \cite{MH,misha}. The dual prepotentials of the extended-affine Weyl group
orbit spaces should also fall into this class \cite{DZ}.
What is new in this case (for modular Frobenius manifolds) is the invariance property of the function $f$ under $I$ as described in
Example \ref{basicexample}. This is a quite severe restriction on the function $f\,$ and it would be interesting to find further examples of functions with this property. Similarly, it would be of interest to find further examples of elliptic $\vee$-systems.

Some of these results in this paper may also be obtained by other means. For the simple elliptic singularities ${\tilde{E}}_{6,7,8}$ the modular property of the twisted periods may be obtained
directly from the superpotential/unfolding of the underlying singularity. For example, for ${\tilde{E}}_6$ the
superpotential is
\begin{eqnarray*}
f_t({\bf z},{\bf s}) & = & z_1^3 + z_2^3 + z_3^3 + s_8 z_1 z_2 z_3 \\
&&+ s_7 z_1 z_2 + s_6 z_1 z_3 + s_5 z_2 z_3 + s_4 z_1 + s_3 z_2 + s_2 z_3 + s_1\,,
\end{eqnarray*}
where the relationship between the $s_i$ and the flat coordinates $t^i$ have been found explicitly \cite{VW}. The twisted
periods are given by contour integrals \cite{E}
\[
\mathfrak{p}(t,\nu) = \sqrt{h(t^8)} \oint_\gamma f_t^{\nu-1} dx\wedge dy\wedge dz
\]
(note the primitive factor not present in the corresponding formulae for simple singularities). It was noticed in \cite{VW}
that $f_t$ is invariant under the action of $I\,.$ The simple transformation property of $h$ then gives the required
relation
\[
\hat{\mathfrak{p}}=\frac{\mathfrak{p}}{t^N}\,.
\]

Finally, one may calculate how objects such as the isomonodromic $\tau$-functions (denoted $\tau_I$ to avoid confusion with the
variable $\tau$) transform under inversion \cite{me}:
\[
{\hat{\tau}}_I = \frac{\tau_I}{\sqrt{t_1}}\,.
\]
For modular Frobenius manifolds this translates in the following modularity property for $\tau_I\,,$
\[
{\hat{\tau}}_I \left( \frac{{\bf{z}}}{\tau},-\frac{1}{\tau} \right) = \frac{\tau_I({\bf z},\tau)}{\sqrt{\tau}}\,.
\]
For the Chazy example $\tau_I^{-48} = (t^2)^{12} \Delta^3(t^3)\,,$ where $\Delta$ is the classical discriminant
of the underlying elliptic curve, and one may verify the transformation property by using the well-known modular transformation
properties of the discriminant \cite{KS}.

The next step is to calculate higher-order corrections to the dispersive deformations of these flows (a problem also raised, with some conjectures, in
\cite{zhang}), and in particular the higher-order corrections to flows originating from a modular Frobenius manifold. One would
hope that the modularity properties would carry over to the full dispersive hierarchy in some form (recall that a reciprocal
transformation was required for a precise statement of invariance even at zero order). At first order this is entirely
computational, on using the formulae in \cite{DZ2}: the $G$-function is know explicitly \cite{me,KS,iabs3} for a wide range of
modular Frobenius manifolds. An understanding of the modularity properties of the Universal Loop Equation \cite{DZ3} or of
approaches based on singularity theory \cite{GM} for these modular Frobenius manifolds would be of great interest and is potentially
a way to construct new examples of dispersive integrable hierarchies.

\section*{Acknowledgements} IABS would like to acknowledge financial support from the British Council (PMI2 Research Co-operation award)
and EM would like to thank the EPSRC for financial support. We both would like to thank Misha Feigin for various useful conversation
and remarks.

\end{document}